\newtheorem{theorem}{Theorem}[section]
\newtheorem{lemma}[theorem]{Lemma}
\newtheorem{claim}{Claim}[section]
\newcommand{\eps}{\varepsilon}
\newcommand{\cm}[1]{}
\newcommand{\comment}[1]{}
\newcommand{\ceil}[1]{\lceil #1 \rceil }
\newcommand{\bx}{\bar{x}}
\newcommand{\bX}{\bar{X}}
\newcommand{\by}{\bar{y}}
\newcommand{\bs}{\bar{s}}
\newcommand{\bb}{\bar{b}}
\newcommand{\bz}{\bar{z}}
\newcommand{\bk}{\bar{k}}
\newcommand{\bxf}{\bar{x}^{f}}
\newcommand{\xf}{x^{f}}
\newcommand{\mOf}{\mO^f}
\newcommand{\byf}{\bar{y}^{f}}
\newcommand{\yf}{y^{f}}
\newcommand{\lr}{L_r}
\newcommand{\bS}{S}
\newcommand{\bK}{K}
\newcommand{\cD}{{\cal D}}
\newcommand{\cE}{{\cal E}}
\newcommand{\cA}{{\cal A}}
\newcommand{\cI}{{\cal I}}
\newcommand{\bL}{\bar{L}}
\newcommand{\bc}{\bar{c}}
\newcommand{\bw}{\bar{w}}
\newcommand{\bW}{\bar{W}}
\newcommand{\bB}{\bar{B}}
\newcommand{\argmax}{\mbox{argmax}}
\newcommand{\mO}{\mathcal{O}}
\newcommand{\mM}{\mathcal{M}}
\newcommand{\SUB}{\mbox{SUB}}
\newcommand{\MC}{\mbox{MC}}
\newcommand{\alphaf}{\alpha_\varphi}
\newcommand{\f}{\varphi}
\newcommand{\tmO}{\tilde{\mathcal{O}}}
\newcommand{\Prb}[1]{{Pr\left[ #1\right] }}
\newcommand{\abs}[1]{ \left| #1 \right |}
 \newenvironment{proof}{\noindent{\bf Proof:}}{
 \hspace*{\fill} $\Box$ \vskip \belowdisplayskip}
\newenvironment{dl_proof}[1]{\noindent{\bf Proof of Lemma #1:}}{
 \hspace*{\fill} $\Box$ \vskip \belowdisplayskip}
 \newenvironment{dl_claim_proof}[1]{\noindent{\bf Proof of Claim #1:}}{
 \hspace*{\fill} $\Box$ \vskip \belowdisplayskip}
\newenvironment{dl_thm_proof}[1]{\noindent{\bf Proof of Theorem #1:}}{
 \hspace*{\fill} $\Box$ \vskip \belowdisplayskip}
 \def\squarebox#1{\hbox to #1{\hfill\vbox to #1{\vfill}}}
\newlength{\tablength}
\newlength{\spacelength}
\newcommand{\tabstar}{\hspace*{\tablength}}
\newcommand{\spacestar}{\hspace*{\spacelength}}
\def\obeytabs{\catcode`\^^I=\active}
{\obeytabs\global\let^^I=\tabstar}
{\obeyspaces\global\let =\spacestar}
\newenvironment{display}{\begingroup\obeylines\obeyspaces\obeytabs}{\endgroup}
\newenvironment{prog}{\begin{display}\parskip0pt\sf}{\end{display}}
\begin{document}
\title{\Large
Approximations for Monotone and Non-monotone Submodular Maximization
with
Knapsack Constraints\footnote{A preliminary version of this paper appeared in the Proceedings of the 20th Annual ACM-SIAM Symposium on Discrete
Algorithms, New York, January 2009.}}
\author{
Ariel Kulik\thanks{Computer Science Department, Technion, Haifa 32000,
Israel. \mbox{E-mail: {\tt ariel.kulik@gmail.com}}} \and Hadas
Shachnai\thanks{Computer Science Department, Technion, Haifa 32000,
Israel. \mbox{E-mail: {\tt hadas@cs.technion.ac.il}}. Work partially supported
by the Technion V.P.R. Fund, by Smoler Research Fund, and by the Ministry of Trade and Industry MAGNET
program through the NEGEV
Consortium (www.negev-initiative.org). }
\and Tami Tamir
\thanks{ School of Computer Science, The
Interdisciplinary Center, Herzliya, Israel.
\mbox{E-mail: {\tt tami@idc.ac.il}}}
}
\date{}

\maketitle

\begin{abstract}

Submodular maximization generalizes many fundamental problems in
discrete optimization, including Max-Cut in directed/undirected
graphs, maximum coverage, maximum facility location and marketing
over social networks.

In this paper we consider the problem of maximizing any submodular
function subject to $d$ knapsack constraints, where $d$ is a fixed
constant. We establish a strong relation between the discrete
problem and its continuous relaxation, obtained through {\em
extension by expectation} of the submodular function. Formally, we
show that, for any non-negative submodular function, an
$\alpha$-approximation algorithm for the continuous relaxation
implies a randomized $(\alpha - \eps)$-approximation algorithm for
the discrete problem. We use this relation to improve the best
known approximation ratio for the problem to $1/4- \eps$, for any
$\eps > 0$, and to obtain a nearly optimal
$(1-e^{-1}-\eps)-$approximation ratio for the monotone case, for
any $\eps>0$.
We further show that the probabilistic domain defined by a
continuous solution can be reduced to yield a polynomial size
domain, given an oracle for the extension by expectation. This
leads to a deterministic version of our technique.
\end{abstract}
\bigskip

%
%

\section{Introduction} \label{sec:intro}

 A real-valued function $f$, whose domain is all the
 subsets of a universe $U$, is called {\em submodular} if,
 for any $S,T \subseteq U$,
 \[
 f(S) + f(T) \geq f(S \cup T) + f(S \cap T).
 \]
The concept of submodularity, which can be viewed as a discrete analog of convexity,
plays a central role in combinatorial theorems and algorithms
(see, e.g., \cite{f99}
and the references therein, and the comprehensive
surveys in \cite{FMV07,Vo08,LMNS09}).
Submodular maximization generalizes many fundamental problems in
discrete optimization, including Max-Cut in directed/undirected
graphs, maximum coverage, maximum facility location and marketing
over social networks (see, e.g., \cite{HMS08}).

In many settings, including set covering or matroid optimization, the
underlying submodular functions are monotone, meaning that $f(S)
\leq f(T)$ whenever $S \subseteq T$. In other settings, the function $f(S)$
is not necessarily monotone.
A classic
example of such a submodular function is $f(S)= \sum_{e \in \delta(S)} w(e)$,
where $\delta(S)$ is a cut in a graph (or hypergraph) $G=(V,E)$
induced by a set of vertices $S \subseteq V$, and $w(e)$ is the weight of an edge $e \subseteq E$.
An example for a monotone submodular function is $f_{G,\bar{p}}:2^L \rightarrow \mathbb{R}$, defined on a subset of vertices in bipartite graph
$G=(L,R,E)$. For any $S \subseteq V$, $f_{G,\bar{p}}(S)= \sum_{v \in N(S)} p_v$, where $N(S)$ is the neighborhood function (i.e., $N(S)$ is the
set of neighbors of $S$), and $p_v \geq 0$ is the profit of $v$, for any $v \in R$.
The problem
$\max\{f_{G,\bar{p}}(S) | \abs{S} \leq k\}$ is classical maximum coverage.

In this paper we consider the following
problem of maximizing a non-negative {\em submodular}
set function subject to $d$
{\em knapsack constraints} ({$\SUB$). Given a $d$-dimensional
budget vector $\bL$, for some $d \geq 1$, and an oracle for a
non-negative submodular set function $f$ over a universe $U$,
where each element $i \in U$ is associated with a $d$-dimensional
cost vector $\bc(i)$, we seek a subset of elements $S \subseteq U$
whose total cost is at most $\bL$, such that $f(S)$ is maximized.

There has been extensive work on maximizing submodular {\em monotone}
functions subject to matroid constraint.\footnote{A (weighted)
matroid is a system of `independent subsets' of a universe, which
satisfies certain {\em hereditary} and {\em exchange} properties
\cite{Sch03}.} For the special case of uniform matroid,
i.e., the problem
$\{ \max f(S): |S| \leq k \}$, for some $k >1$, Nemhauser et. al
showed in \cite{NWF78}
that a greedy algorithm yields a ratio of $1-e^{-1}$ to the
optimum. Later works presented greedy algorithms that achieve this
ratio for other special matroids or for
variants of maximum coverage (see, e.g., \cite{as04,
kmn99,s04,CK04}).
For a general matroid constraint, Calinescu et al. showed in
\cite{ccpv07} that a scheme based on solving a continuous
relaxation of the problem followed by {\em pipage rounding} (a
technique introduced by Ageev and Sviridenko \cite{as04}) achieves
the ratio of $1-e^{-1}$ for maximizing submodular monotone
functions that can be expressed as a sum of weighted rank
functions of matroids. Subsequently, this result was extended by
Vondr\'{a}k \cite{Vo08} to general monotone submodular functions.

 The bound of $1 - e^{-1}$ is the best possible for all of the
above problems. This follows from the lower bound of Nemhauser and
Wolsey \cite{nw78} in the oracle model, and the later result of
Feige \cite{f98} for the specific case of maximum coverage, under
the assumption that $P \neq NP$.

Other variants of monotone submodular optimization were also considered.
In~\cite{BKNS10}, Bansal et al. studied the problem of maximizing a monotone submodular function
subject to $n$ knapsack constraints, for arbitrary $n \geq 1$, where each element appears in up to $k$
constraints, and $k$ is fixed. The paper presents a $\frac{8ek}{e-1}$ and
$\frac{e^2k}{e-1} +o(k)$ approximations for
this problem. Demaine and Zadimoghaddam \cite{DZ10} studied
bi-criteria approximations for monotone submodular set function
optimization.

The problem of maximizing a {\em non-monotone} submodular function has been studied as well. Feige et al. \cite{FMV07}
 considered (unconstrained) maximization of a  general non-monotone submodular function.
 The paper gives several (randomized and deterministic)
approximation algorithms, as well as hardness results, also for the special case
where the function is {\em symmetric}.

Lee et al. \cite{LMNS09} studied the problem of
maximizing a general submodular function under linear and matroid
constraints. They proposed algorithms that achieve approximation
ratio of $1/5 - \eps$ for  the problem with $d$ linear constraints
and a ratio of $1/(d+2+ 1/d + \eps)$ for $d$ matroid constraints,
for any fixed integer $d \geq 1$.

Improved lower and upper bounds for non-constrained and constrained submodular maximization
were recently derived by Gharan and Vondr\'{a}k
\cite{SV10}. However, this paper does not consider knapsack constraints.

\comment{ Submodular maximization problem where the focus of many
works. The best approximation ratio for maximizing non-monotone
submodular function without any constraints is $2/5-o(1)$ general
functions, and $1/2-\eps$ for symmetric function, both algorithm
are due to \cite{FMV07}. { {\bf write a survey of results for
constraints maximization}. For monotone functions mention the
classic greedy algorithm for cardinality and knapsack constraint.
Later matroid constraints, and afterwards our results. For
non-monotone mention Sviridenko. } }

Several fundamental algorithms for submodular maximization
(see, e.g., \cite{as04,ccpv07,Vo08,LMNS09})
use a continuous extension of submodular function, to which we
refer as \emph{extension by expectation}. Given a submodular
function $f: 2^U \rightarrow \mathbb{R}$, we define $F: [0,1]^U
\rightarrow \mathbb{R}$.
For any $\bar{y} \in [0,1]^U$, let $R \subseteq U$  be a random
variable such that $i\in R$ with probability $y_i$ (we say that
$R\sim \bar{y}$). Then
\begin{equation*}
F(\bar{y})= E[f(R)] =
\sum_{R \subseteq U} \left( f(R) \prod_{i\in R} y_i \prod_{i \notin
R} (1-y_i) \right).
\end{equation*}
%
%
%
The general framework of these algorithms is to obtain first a
fractional solution for the continuous extension, followed by
rounding which yields a solution for the discrete problem.

Using the definition of $F$, we define the continuous relaxation of our problem called
{\em continuous \SUB}. Let $P=\{\bar{y} \in [0,1]^U |
\sum_{i\in U} y_i{\bc}(i) \leq \bL\}$ be the polytope of the instance, then the problem
is to find $\bar{y}\in P$ for which $F(\bar{y})$ is
maximized. For $\alpha \in (0,1]$, an algorithm $\cA$ yields $\alpha$-approximation
 for the continuous problem with respect to a submodular
function $f$, if for any assignment of non-negative costs to the elements, and for any non-negative budget, $\cA$ finds a feasible solution for
continuous ${\SUB}$ of value at least $\alpha \mO$, where $\mO$ is the value of an optimal
(integral) solution for ${\SUB}$ with the given costs and budget.

For some specific families of submodular functions,
linear programming can be used to derive such approximation algorithms
(see e.g  \cite{as04,ccpv07}). For monotone submodular functions,
Vondr\'{a}k presented in \cite{Vo08} a
$(1-e^{-1}-o(1))$-approximation algorithm for the continuous
problem. Subsequently,
Lee et al. \cite{LMNS09} considered the problem of maximizing {\em any}
submodular function with multiple knapsack constraints and developed
a $(\frac{1}{4}-o(1))$-approximation algorithm for the continuous
problem; however, noting that the rounding method of \cite{KST09},\footnote{
The paper \cite{KST09} is a preliminary version of this paper.}
which proved useful for monotone functions, cannot be applied in the non-monotone case,
a $(\frac{1}{5}-\eps)$-approximation was obtained for the discrete
problem, by using simple randomized rounding.
This gap of approximation ratio between the continuous and the
discrete case led us to further develop the technique in \cite{KST09},
so that it can be applied also for non-monotone functions.

\subsection{Our Results}
In this paper
we establish a strong relation
between the problem of maximizing any submodular function subject
to $d$ knapsack constraints and its continuous relaxation.
Formally, we show (in Theorem~\ref{thm:continuous_eq}) that for
any non-negative submodular function, an $\alpha$-approximation
algorithm for the continuous relaxation implies a randomized
$(\alpha - \eps)$-approximation algorithm for the discrete
problem. We use this relation to
obtain approximation ratio of $1/4- \eps$ for {\SUB}, for any
$\eps > 0$, thus improving the best known result for the problem,
due to Lee et al. \cite{LMNS09}. For the case where the objective
function is monotone, we use this relation to obtain a nearly
optimal $(1-e^{-1}-\eps)$ approximation, for any $\eps>0$. An
important consequence of the above relation is that for any
class of submodular functions,
a future improvement of the approximation ratio for the continuous
problem, to a factor of $\alpha$,
immediately implies an approximation ratio of $(\alpha -\eps)$ for
the original instance.

Our technique applies random sampling on the solution space, using
a distribution defined by the fractional solution for the problem.
In Section \ref{mcmb:deter} we show how to convert a feasible
solution for the continuous problem to another feasible solution
with up to $O(\log |U|)$ fractional entries, given an oracle to
the extension by expectation. This facilitates the usage of
exhaustive search instead of sampling, which leads to a
deterministic version of our technique. Specifically, we obtain a
deterministic $(1/4 -\eps)$-approximation for general instances
and $(1- e^{-1} -\eps)$-approximation for instances where the
submodular function is monotone.
For the special case of maximum
coverage with $d$ knapsack constraints,
that is, $\SUB$ where the objective
function is $f=f_{G,\bar{p}}$  for a given bipartite graph $G$ and
profits $\bar{p}$, this result leads to a deterministic $(1-
e^{-1} -\eps)-$approximation algorithm, since the extension by
expectation of $f_{G,\bar{p}}$ can be deterministically evaluated.
Some basic properties of submodular functions are given in
Appendix \ref{app:basic_props}.

\subsection{Recent Developments}

Subsequent to our study of maximizing monotone submodular
functions subject to multiple knapsack constraints \cite{KST09},
Chekuri et al. \cite{CVZ10}
showed that, by using a more
sophisticated rounding technique,
the algorithm in
\cite{KST09} can be applied to derive a
$(1-e^{-1}-\eps)$-approximation for
maximizing a submodular function subject to $d$ knapsack
constraints and a matroid constraint.
Specifically, given a fractional solution for the problem, the
authors
define a probability distribution over the solution space, such
that all of elements in the domain of the distribution are inside
the matroid; these elements also satisfy Chernoff-type
concentration bounds, which can be used to prove some of the
probabilistic claims in \cite{KST09}.
The desired approximation ratio is obtained by using the algorithm
of \cite{KST09} with sampling replaced by the above distribution
in the rounding step.
Recently, the same set of authors improved in \cite{CVZ10a} the
bound of $(1/4- \eps)$ presented here to $0.325$.

\comment{
 In~\cite{CVZ10} Chekuri, Vondr\'{a}k and Zenklusen
presented a $(1-e^{-1}-\eps)$-approximation algorithm for monotone
maximization subject to $d$ knapsack constraints and a matroid
constraint. Their main tool was to provide a distribution (given a
fractional solution), such that every element in the
distribution's domain is inside the matroid, and also maintains a
Chernoff-type concentration bounds which can be used to prove our
probabilistic claims. By applying this distribution over our
algorithm (while referring to \cite{KST09}), instead of a
sampling, the desired approximation ratio was attained. }

 \section{Maximizing Submodular Functions}

\label{sec:randomized_mcmb}

In this section we describe our framework for maximizing a
submodular set function subject to multiple linear constraints.
For short, we call this problem $\SUB$.

\subsection{Preliminaries}
\label{sec:prel}

\paragraph{Notation:}
An essential component in our framework is the distinction between
elements by their costs. We say that an element $i$
is \emph{small} if $\bc (i) \leq \eps^3 \bL$; otherwise,
the element is \emph{big}.


Given a universe $U$, we call a subset of elements $S\subseteq U$
{\em feasible} if the total cost of elements in $S$ is bounded by
$\bL$. We say that  $S$ is \emph{$\eps$-nearly feasible} (or
\emph{nearly feasible}, if $\eps$ is known from the context) if the total
cost of the elements in $S$ is bounded by $(1+\eps) \bL$. We refer
to $f(S)$ as the value of $S$. Similar to the discrete case,
$\bar{y}\in [0,1]^U$ is feasible if $\bar{y} \in P$.

For any subset $T \subseteq U$, we define $f_T: 2^{U} \rightarrow
\mathbb{R}_{+}$ by $f_T(S) = f(S \cup T) - f(T)$. It is easy to
verify that if $f$ is a submodular set function then $f_T$ is also
a submodular set function. Finally, for any set $S \subseteq U$,
we define $c_r(S)=\sum_{i \in S} c_r(i)$, where $1 \leq r \leq d$, and
 $\bc(S)= \sum_{i \in S} \bc(i)$.  For a fractional solution $\by \in [0,1]^U$, we
define $c_r(\bar{y}) = \sum_{i\in U} c_r(i) \cdot y_i$ and
$\bc(\bar{y})= \sum_{i\in U} \bc(i)\cdot y_i$.

\paragraph{Overview:}
Our algorithm consists of two phases, to which we refer as
{\em rounding procedure} and {\em profit enumeration}. The rounding procedure
yields an $(\alpha-O(\eps))$-approximation for
instances in which there are no big elements, using an
$\alpha$-approximate solution for the continuous problem. It relies heavily
on Theorem \ref{thm:prb_claim} that gives some conditions on the
probabilistic domain of solutions; these conditions guarantee that the
expected profit of the resulting nearly feasible solution is high. This
solution is then converted to a feasible one, by using a
fixing procedure. We first present a randomized version
and later show
how to derandomize the rounding procedure.

The profit enumeration phase uses enumeration over the most
profitable elements in an optimal solution; then it reduces
a general instance to another instance with no big elements,
on which we apply the rounding procedure.

Finally, we combine the above results with an algorithm for the
continuous problem (e.g., the algorithm of \cite{Vo08}, or
\cite{LMNS09}) to obtain approximation algorithm for $\SUB$.

\subsection{A Probabilistic Theorem}
\label{sec:prb_claim}

We first prove a general probabilistic
theorem which refers to a slight generalization of our problem
(called {\em generalized $\SUB$}).
In addition to the standard
input for the problem, there is also a collection of subsets $\mM\subseteq 2^U$,
such that if $T\in \mM$ and $S \subseteq T$
then $S\in \mM$. The goal is  to find
a subset $S \subseteq \mM$, such that  $\bc(S) \leq \bL$
and $f(S)$ is maximized.

\begin{theorem}
\label{thm:prb_claim}
For a given input of generalized $\SUB$,
let $\chi$ be a distribution over $\mM$
and $D$ a random variable $D \sim \chi$, such that
\begin{enumerate}
\item $E \left[ f(D) \right] \geq \mO/5 $, where $\mO$ is an optimal
solution for the given instance.
\item For any $1\leq r \leq d$, $E[c_r(D) ]  \leq L_r$
\item For any $1 \leq r \leq d$, $c_r(D)= \sum_{k=1}^{m} c_r(D_k)$, where
 $D_k \sim \chi_k$ and $D_1, \ldots , D_m$ are independent random
variables.
\item
\label{thm:prb_claim:c4}
For any $1\leq k  \leq m$ and $1\leq r \leq d$, it holds that
either $c_r(D_k) \leq \eps ^3 L_r$ or $c_r(D_k)$ is fixed.
\end{enumerate}
Let $D'=D$ if $D$ is $\eps$-nearly feasible, and $D'=\emptyset$ otherwise. Then $D'$ is
always $\eps$-nearly feasible, $D' \in \mM$, and
$E[f(D')] \geq (1-O(\eps) ) E[f(D)]$.
\end{theorem}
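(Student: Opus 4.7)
The plan is to prove the three conclusions of the theorem in order, concentrating almost all the work on the third. The first two claims---that $D'$ is always $\eps$-nearly feasible and that $D' \in \mM$---are immediate from the definition of $D'$. If $D$ is nearly feasible then $D' = D \in \mM$; otherwise $D' = \emptyset$, which lies in $\mM$ by downward closure (any singleton $\{i\} \in \mM$ implies $\emptyset \in \mM$) and has zero cost, so is vacuously feasible.

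For the value inequality, let $A$ denote the event that $D$ is not $\eps$-nearly feasible, so the claim $E[f(D')] \geq (1 - O(\eps)) E[f(D)]$ reduces to showing $E[f(D) \mathbf{1}_A] \leq O(\eps) E[f(D)]$. I would decompose $A = \bigcup_{r=1}^{d} A_r$ with $A_r = \{c_r(D) > (1+\eps) L_r\}$ and bound each $\Pr[A_r]$ via a multiplicative Chernoff-type inequality. Conditions 3 and 4 are tailored to this: $c_r(D) = \sum_k c_r(D_k)$ is a sum of independent terms, the non-fixed terms are bounded by $\eps^3 L_r$ (the fixed summands contribute a deterministic offset that simply shifts the threshold), and by Condition 2 the expectation is at most $L_r$. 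Normalizing the random summands by dividing by $\eps^3 L_r$ and applying Chernoff then yields $\Pr[A_r] \leq \exp(-\Omega(1/\eps))$, and a union bound gives $\Pr[A] \leq d \exp(-\Omega(1/\eps))$, which is tiny for small $\eps$.

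The main obstacle is passing from a small $\Pr[A]$ to a small value of $E[f(D) \mathbf{1}_A]$, because a priori $f(D)$ could be much larger on the rare bad event than on average. I would handle this by first assuming without loss of generality $f(\emptyset) = 0$ (shifting $f$ preserves submodularity and non-negativity, and any constant offset cancels on both sides of the final inequality once we renormalize), so that submodularity yields the subadditive bound $f(S) \leq \sum_{i \in S} f(\{i\})$. Since each singleton is a feasible solution to the underlying \SUB{} instance, $\max_i f(\{i\}) \leq \mO$, and by Condition 1, $\mO \leq 5 E[f(D)]$. Combining these gives the uniform bound $f(D) \leq 5|U| E[f(D)]$, and therefore
\[
E[f(D) \mathbf{1}_A] \;\leq\; 5 d |U| \exp(-\Omega(1/\eps)) \cdot E[f(D)],
\]
which is $O(\eps) E[f(D)]$ provided $1/\eps$ dominates $\log(|U| d)$. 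In the regimes in which the theorem is applied elsewhere in the paper, $\eps$ can be taken small enough that this calibration is automatic; for a fully general statement, one would tighten the argument by replacing the crude pointwise upper bound on $f(D)$ with a concentration inequality for $f(D)$ itself that exploits the product structure guaranteed by Condition 3.
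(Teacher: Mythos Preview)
Your argument has a genuine gap in the final step. You correctly obtain $\Pr[A] \leq d\exp(-\Omega(1/\eps))$ via Chernoff (this is in fact sharper than the paper's Chebyshev-based bound $\Pr[A] \leq d\eps$), but you then bound $f(D)$ pointwise by $|U|\cdot\mO$ and conclude
\[
E[f(D)\mathbf{1}_A] \;\leq\; 5d\,|U|\,\exp(-\Omega(1/\eps))\cdot E[f(D)].
\]
This is $O(\eps)E[f(D)]$ only when $1/\eps \gtrsim \log|U|$. The theorem, however, is stated and used with $\eps$ a \emph{fixed constant independent of $|U|$}: in the applications in Section~\ref{sec:mcmb} the user fixes $\eps$ and the universe may be arbitrarily large, so your proposed calibration is simply not available. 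The fallback you gesture at---a concentration inequality for $f(D)$ itself---runs into the same obstacle: a bounded-differences argument gives a per-coordinate Lipschitz constant of order $\mO$ with $m$ coordinates, and $m$ can be as large as $|U|$, leaving the same $|U|$-dependence. (A secondary issue: your claim that each singleton is feasible is not guaranteed by the hypotheses either, though this is also implicit in the paper's argument.)

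The paper closes this gap by replacing your uniform bound $f(D)\leq |U|\cdot\mO$ with one that \emph{correlates} the size of $f(D)$ with the amount of budget violation. Writing $R=\max_r c_r(D)/L_r$, the key observation (Claim~\ref{lemma:bounding_via_R}) is that on the event $\{R\leq \ell\}$ the set $D$ can be partitioned into at most $2d\ell$ feasible subsets, and hence by subadditivity $f(D)\leq 2d\ell\cdot\mO$. Coupling this with the tail bound $\Pr[R>\ell]\leq d\eps^3/(\ell-1)^2$ and summing over dyadic shells $2^\ell\leq R<2^{\ell+1}$ yields $E[f(D)\mathbf{1}_A]=O(d^2\eps)\cdot\mO\leq O(\eps)E[f(D)]$, with no dependence on $|U|$ whatsoever. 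The idea you are missing is precisely this partition argument: bound $f$ on the bad event by how badly the budget is violated, not by how large the universe is.
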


To prove the results in this section, it suffices to use a special case of Theorem
 \ref{thm:prb_claim} (formulated as our next result). We use
this theorem in its full generality in \cite{full}, in developing approximation
 algorithms for variants of maximum coverage and GAP.

\begin{lemma}
\label{lemma:expected_random}
Let $\bx \in [0,1]^U$ be a feasible fractional solution
such that
$F(\bx) \geq \mO /5$, where $\mO$ is the optimal solution for generalized ${\SUB}$.
 Let $D \subseteq U$ be a random set such
that $D \sim \bx$ (i.e., for all $i \in U$, $i \in R$ with probability $x_i$),
 and
let $D'$ be a random set such that $D'=D$ if $D$ is $ \eps$-nearly feasible, and $D'=\emptyset$ otherwise.
Then $D'$ is always $\eps$-nearly feasible, and
$E[f(D')] \geq (1-O(\eps) ) F(\bx)$.
\end{lemma}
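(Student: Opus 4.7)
The plan is to derive Lemma~\ref{lemma:expected_random} as a direct specialization of Theorem~\ref{thm:prb_claim}. I would set $\mM = 2^U$, which is trivially downward closed, and construct the independent decomposition required by conditions~3 and~4 of the theorem by splitting $D$ elementwise: for each $k \in U$, let $D_k = D \cap \{k\}$, so that $D_k = \{k\}$ with probability $x_k$ and $D_k = \emptyset$ otherwise, independently across $k$ by the definition of $D \sim \bx$. Then $D = \bigcup_k D_k$ and $c_r(D) = \sum_k c_r(D_k)$ for every $r$.

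With this setup, each of the four hypotheses of Theorem~\ref{thm:prb_claim} is routine to verify. Condition~1 holds because $E[f(D)] = F(\bx) \geq \mO/5$ by assumption. Condition~2 holds since $E[c_r(D)] = \sum_i c_r(i)\, x_i = c_r(\bx) \leq L_r$ as $\bx \in P$. Condition~3 is immediate from the elementwise decomposition. Condition~4 uses the standing small-elements assumption of the rounding procedure: for every $k$, $c_r(\{k\}) \leq \eps^3 L_r$, so $c_r(D_k) \leq \eps^3 L_r$ in all cases. Theorem~\ref{thm:prb_claim} then directly yields that $D'$ is always $\eps$-nearly feasible and $E[f(D')] \geq (1-O(\eps))\, E[f(D)] = (1-O(\eps))\, F(\bx)$, which is exactly the claim.

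The only conceptual subtlety, and the step to which I would devote the most attention, is condition~4: it really relies on every element being small, and would fail for a big element whose cost neither lies below $\eps^3 L_r$ nor is deterministic. In the rounding procedure this is fine, since big elements are first peeled off by the profit enumeration phase; but the dependence is worth flagging, since the lemma statement itself does not make this assumption explicit. Beyond that, the work is purely bookkeeping, and no probabilistic argument is needed here beyond what is already packaged inside Theorem~\ref{thm:prb_claim}.
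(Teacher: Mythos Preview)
Your proposal is correct and matches the paper's approach: the paper presents Lemma~\ref{lemma:expected_random} simply as a special case of Theorem~\ref{thm:prb_claim}, and your elementwise decomposition with $\mM = 2^U$ is precisely the intended instantiation. Your observation about condition~\ref{thm:prb_claim:c4} is on point---the small-elements hypothesis is indeed implicit in how the lemma is applied (Sections~\ref{sec:random_mcmb} and~\ref{mcmb:deter}), and one could also note that condition~\ref{thm:prb_claim:c4} is automatically met for any element $k$ with $x_k \in \{0,1\}$, since then $c_r(D_k)$ is fixed.
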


\begin{dl_thm_proof}{\ref{thm:prb_claim}}
Define an indicator random variable $F$ such that $F=1$ if $D$ is $\eps$-nearly
feasible, and $F=0$ otherwise.

\begin{claim}
\label{lemma:prob_F}
$Pr[F=0] \leq d \eps$.
\end{claim}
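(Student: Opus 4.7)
The plan is to prove the claim by establishing the one-dimensional bound $\Pr[c_r(D) > (1+\varepsilon) L_r] \leq \varepsilon$ for each constraint $r$ separately, and then applying a union bound over $r = 1,\ldots,d$. Throughout, I may assume $\varepsilon$ is sufficiently small, since otherwise $d\varepsilon \geq 1$ and the claim is trivial.

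Fix $r$. Using hypothesis 4, partition $\{1,\ldots,m\}$ into $K_r^{\mathrm{fix}}$ (the indices $k$ for which $c_r(D_k)$ is a deterministic constant) and $K_r^{\mathrm{small}}$ (the indices for which $c_r(D_k) \leq \varepsilon^3 L_r$ almost surely). Writing
\[
c_r(D) \;=\; A_r + Y_r, \qquad A_r := \sum_{k \in K_r^{\mathrm{fix}}} c_r(D_k), \qquad Y_r := \sum_{k \in K_r^{\mathrm{small}}} c_r(D_k),
\]
$A_r$ is a constant and $Y_r$ is a sum of independent nonnegative random variables each bounded above by $\varepsilon^3 L_r$. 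Hypothesis 2 gives $A_r + E[Y_r] \leq L_r$, so in particular $A_r \leq L_r$. Hence the failure event in coordinate $r$ satisfies
\[
\{c_r(D) > (1+\varepsilon)L_r\} \;\subseteq\; \{Y_r > E[Y_r] + \varepsilon L_r\}.
\]

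Now I would invoke a Chernoff/Bernstein-type inequality on $Y_r$. The second moment is bounded by $\sum_k E[c_r(D_k)^2] \leq \varepsilon^3 L_r \cdot E[Y_r] \leq \varepsilon^3 L_r^2$, and each summand is at most $b = \varepsilon^3 L_r$. With $t = \varepsilon L_r$, Bernstein's inequality yields
\[
\Pr[Y_r > E[Y_r] + \varepsilon L_r] \;\leq\; \exp\!\left(-\tfrac{t^2}{2\sigma^2 + \tfrac{2}{3}bt}\right) \;\leq\; \exp\!\left(-\tfrac{\varepsilon^2 L_r^2}{3\varepsilon^3 L_r^2}\right) \;=\; \exp(-1/(3\varepsilon)),
\]
which is at most $\varepsilon$ for all sufficiently small $\varepsilon$. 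A union bound over the $d$ constraints then gives $\Pr[F=0] \leq d\varepsilon$ as required.

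The main (small) obstacle is the separation-and-concentration step: one must be careful that the fixed parts $A_r$ do not already eat up the entire budget (handled by hypothesis 2, which forces $A_r \leq L_r$) and that the remaining randomness has summands small enough ($\leq \varepsilon^3 L_r$) for a standard concentration inequality to produce an $\varepsilon$-level tail at deviation $\varepsilon L_r$. Hypothesis 4 is precisely tailored so that this two-case decomposition works per coordinate, even though a given $D_k$ may be ``fixed'' on some coordinates and ``small'' on others.
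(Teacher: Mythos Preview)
Your proof is correct and follows the same overall structure as the paper's: bound the per-coordinate tail probability and then take a union bound over the $d$ coordinates. The paper, however, uses a more elementary second-moment argument in place of Bernstein. It observes directly that $\mathrm{Var}[c_r(D)] = \sum_k \mathrm{Var}[c_r(D_k)] \leq \sum_{k \in V_r} E[c_r(D_k)^2] \leq \varepsilon^3 L_r \sum_k E[c_r(D_k)] \leq \varepsilon^3 L_r^2$ (the fixed summands contribute zero variance), and then applies the Chebyshev--Cantelli inequality with $t = \varepsilon L_r$ to get $\Pr[c_r(D) \geq (1+\varepsilon)L_r] \leq \varepsilon^3 L_r^2 / (\varepsilon^2 L_r^2) = \varepsilon$ exactly, with no ``$\varepsilon$ sufficiently small'' proviso. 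Your Bernstein route gives a quantitatively stronger tail $\exp(-\Theta(1/\varepsilon))$, but the paper's Chebyshev argument is simpler and already tight enough for the claim.
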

\begin{proof}
For any dimension $ 1 \leq r \leq d$,
it holds that $E[ c_r(D) ] = \sum_{k=1}^{m} E[c_r(D_k)] \leq L_r$.
Define $V_r=\{k | c_r(D_k) \mbox{~is not fixed} \}$. Then,
\[
\begin{array}{ll}
Var[c_r(D)] & = \displaystyle{\sum_{k=1}^{m} Var[c_r(D_k)] \leq \sum_{k\in V_r} E[c_r^2(D_k)]}
\\  & \leq \displaystyle{
\sum_{k\in V_r} E [c_r(D_k)] \cdot \eps^3 L_r \leq  \eps^3 L_r \sum_{k=1}^{m} E [c_r(D_k)] \leq
\eps^3 L^2_r}.
\end{array}
\]
The first inequality holds since $Var[X] \leq E[X^2]$, and the second
inequality follows from the fact that $c_r(D_k) \leq \eps^3 L_r$ for
$k\in V_r$.
Recall that, by the Chebyshev-Cantelli inequality, for any $t > 0$ and a random
variable $Z$,
$$Pr\left[Z- E[Z] \geq t\right] \leq \frac{Var[Z]}{Var[Z]+ t^2}.$$
Thus,
\begin{eqnarray*}
Pr\left[c_r(D) \geq (1+\eps) L_r\right] &=&
Pr\left[ c_r(D) - E[c_r(D)] \geq (1+\eps) L_r -E[c_r(D)] \right]  \\
&\leq& Pr \left[ c_r(D) - E[c_r(D)] \geq \eps \cdot  L_r \right]
\leq \frac{ \eps^3 L^2_r}{\eps^2 L^2_r} = \eps.
\end{eqnarray*}
By the union bound, we have that
$$ Pr[F=0] \leq \sum_{r=1}^{d} Pr[c_r(D) \geq (1+\eps) L_r] \leq
d \eps.$$
\end{proof}

For any dimension $1 \leq r \leq d$, let $R_r = \frac{c_r(D)}{\lr}$,
and define $R= \max_r R_r$, then
$R$ denotes the maximal relative deviation of the cost from the $r$-th entry
in the budget vector, where the maximum is taken over $1 \leq r \leq d$.

\begin{claim}
\label{lemma:prob_l} For any $\ell> 1$,
$$Pr[R > \ell] <\frac{d\eps^3}{(\ell-1)^2}.$$
\end{claim}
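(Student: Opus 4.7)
The plan is to closely parallel the proof of the preceding Claim \ref{lemma:prob_F}: the statement $Pr[F=0] \leq d\eps$ is essentially the special case $\ell = 1+\eps$ of the current claim, and the same variance bound plus Chebyshev-Cantelli plus union bound argument extends to arbitrary $\ell > 1$.

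First, I would reuse the variance bound $Var[c_r(D)] \leq \eps^3 L_r^2$ for each dimension $1 \leq r \leq d$, which was already established in the proof of Claim \ref{lemma:prob_F}. Recall this follows from independence of $D_1, \dots, D_m$ (so variances add), the fact that $c_r(D_k)$ either contributes nothing to the variance (if fixed) or satisfies $Var[c_r(D_k)] \leq E[c_r^2(D_k)] \leq \eps^3 L_r \cdot E[c_r(D_k)]$ via condition \ref{thm:prb_claim:c4}, and finally $\sum_k E[c_r(D_k)] = E[c_r(D)] \leq L_r$ by condition (2).

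Next, for each fixed $r$, I would apply the Chebyshev-Cantelli inequality to the random variable $c_r(D)$. Since $E[c_r(D)] \leq L_r$, the event $\{R_r > \ell\} = \{c_r(D) > \ell L_r\}$ is contained in $\{c_r(D) - E[c_r(D)] > (\ell - 1) L_r\}$, so
\[
Pr[R_r > \ell] \;\leq\; \frac{Var[c_r(D)]}{Var[c_r(D)] + (\ell-1)^2 L_r^2} \;<\; \frac{\eps^3 L_r^2}{(\ell-1)^2 L_r^2} \;=\; \frac{\eps^3}{(\ell-1)^2},
\]
where the strict inequality comes from dropping the positive term $Var[c_r(D)]$ in the denominator and then applying the variance bound. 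A union bound over the $d$ dimensions then gives $Pr[R > \ell] \leq \sum_{r=1}^d Pr[R_r > \ell] < d \eps^3/(\ell-1)^2$, which is the desired conclusion.

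I do not anticipate any real obstacle: the argument is a direct quantitative refinement of the $(1+\eps)$-threshold case already handled in Claim \ref{lemma:prob_F}, and the only subtlety is to make sure the inequality is strict, which is automatic from the Chebyshev-Cantelli form once $\ell > 1$.
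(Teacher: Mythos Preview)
Your proposal is correct and follows essentially the same argument as the paper: reuse the variance bound $Var[c_r(D)] \leq \eps^3 L_r^2$ from Claim~\ref{lemma:prob_F}, apply Chebyshev--Cantelli per coordinate with threshold $(\ell-1)L_r$, and take a union bound over the $d$ dimensions. If anything you are slightly more careful than the paper, which writes $\leq$ throughout its proof despite the claim's strict inequality; your observation that dropping $Var[c_r(D)]$ from the denominator yields the strict bound is the right justification.
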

\begin{proof}
By the Chebyshev-Cantelli inequality we have that, for any dimension
$1 \leq r \leq d$,
\begin{eqnarray*}
Pr[R_r > \ell ] &=&
Pr[c_r(D) > \ell \cdot \lr]
 \\
&\leq&
Pr \left[ c_r(D) -E[c_r(D)]  > (\ell-1) \lr \right] \leq \\
&\leq &
\frac{\eps^3 \lr^2}{(\ell-1)^2 \lr^2} \leq
\frac{\eps^3}{(\ell-1)^2},
\end{eqnarray*}
and by the union bound, we get that
$$Pr[R > \ell] \leq \frac{d\eps^3}{(\ell-1)^2}.$$
\end{proof}

\begin{claim}
\label{lemma:bounding_via_R}
For any integer $\ell >1$, if $R \leq \ell$ then
$$f(D) \leq 2d \ell \cdot \mO.$$
\end{claim}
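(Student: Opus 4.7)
The plan is to partition $D$ into at most $2d\ell$ cost-feasible subsets and then apply sub-additivity of non-negative submodular functions to deduce the bound. Because the rounding procedure is invoked only after the big elements have been handled by profit enumeration, every element $i\in D$ satisfies $\bc(i)\leq \eps^3\bL\leq \bL$; in particular each singleton is itself cost-feasible, and any $B_j\subseteq D$ automatically belongs to $\mM$ by downward closure of $\mM$.

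To build the partition I would list the elements of $D$ in an arbitrary order and run Next-Fit: maintain a single open bin, add each new element to it whenever the resulting set remains cost-feasible, and otherwise close that bin and open a fresh one containing just this element. Call the resulting bins $B_1,\dots,B_t$. For every $j<t$, let $i_{j+1}$ be the first element placed in $B_{j+1}$; by construction there is a dimension $r(j)$ with $c_{r(j)}(B_j)+c_{r(j)}(i_{j+1}) > L_{r(j)}$, and since $i_{j+1}\in B_{j+1}$ also $c_{r(j)}(B_j)+c_{r(j)}(B_{j+1}) > L_{r(j)}$. Writing $n_r=|\{j:\, r(j)=r\}|$ and summing this inequality over all $j$ whose witness is $r$,
\[
n_r\, L_r \;<\; \sum_{j:\, r(j)=r}\bigl(c_r(B_j)+c_r(B_{j+1})\bigr) \;\leq\; 2\,c_r(D) \;\leq\; 2\ell\, L_r,
\]
because each bin appears in the middle sum at most twice (once as a $B_j$ and once as a $B_{j+1}$). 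Hence $n_r<2\ell$, and summing over dimensions gives $t-1=\sum_r n_r<2d\ell$, so $t\leq 2d\ell$.

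Finally, I would invoke the standard sub-additivity inequality $f(A\cup B)\leq f(A)+f(B)$, which is immediate from submodularity together with $f(A\cap B)\geq 0$, and extend it by induction to obtain $f(D)=f\bigl(\bigcup_{j=1}^t B_j\bigr)\leq \sum_{j=1}^t f(B_j)$. Each $B_j$ is feasible, so $f(B_j)\leq \mO$; combining, $f(D)\leq t\cdot\mO\leq 2d\ell\cdot\mO$, as required. The main delicacy is pinning down the constant $2d\ell$ in the packing step: Next-Fit (as opposed to First-Fit) is what makes the accounting telescope, with the factor $2$ arising because each bin is counted at most twice in the pair sum $c_r(B_j)+c_r(B_{j+1})$, and the factor $d$ arising because the $t-1$ bin-closures are each charged to a single witnessing dimension.
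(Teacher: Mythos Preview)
Your proof is correct and follows the same strategy as the paper's: partition $D$ into at most $2d\ell$ cost-feasible subsets and then apply subadditivity of a non-negative submodular function (the paper's Lemma~\ref{lemma:submodular_summation}). The paper's own proof merely asserts that such a partition exists without saying how; your Next-Fit construction together with the dimension-witness counting is exactly the kind of argument needed to justify that assertion, so you have fleshed out what the paper leaves implicit rather than taken a different route. One small remark: Claim~\ref{lemma:bounding_via_R} sits inside the proof of Theorem~\ref{thm:prb_claim}, whose hypotheses do not literally include ``no big elements''; all your argument actually needs is that each singleton $\{i\}\subseteq D$ is cost-feasible (so Next-Fit can always open a new bin), and that is a harmless WLOG assumption the paper tacitly makes as well.
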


\begin{proof}
The set $D$ can be partitioned to $2 d \ell$ sets
$D_1, \ldots D_{2 d \ell}$ such that each of these
sets is a feasible solution. Hence, $f(D_i) \leq \mO$.
By Lemma~\ref{lemma:submodular_summation}, we have that $f(D) \leq f(D_1) +\ldots +f(D_{2 d \ell})
\leq 2 d \ell \mO$.
\end{proof}
Combining the above results we have

\begin{claim}
\label{lemma:mcmb_nearly_feasible}
$E[f(D')]\geq (1-O(\eps))
E[f(D)]$.
\end{claim}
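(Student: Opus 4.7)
The plan is to start from the identity
\[
E[f(D')] \;=\; E[f(D)\cdot \mathbb{1}_{F=1}] + f(\emptyset)\cdot Pr[F=0]
\;\geq\; E[f(D)] - E[f(D)\cdot \mathbb{1}_{F=0}],
\]
where the inequality uses $f(\emptyset)\geq 0$ (non-negativity of the submodular function). Thus it suffices to show $E[f(D)\cdot \mathbb{1}_{F=0}] = O(\eps)\cdot E[f(D)]$. Since by hypothesis $E[f(D)] \geq \mO/5$, it is enough to prove $E[f(D)\cdot \mathbb{1}_{F=0}] = O(\eps)\cdot \mO$.

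To bound $E[f(D)\cdot \mathbb{1}_{F=0}]$, I would partition the event $\{F=0\}=\{R>1+\eps\}$ according to integer ranges of the random variable $R$. Write
\[
E[f(D)\cdot \mathbb{1}_{F=0}]
\;\leq\; E[f(D)\cdot \mathbb{1}_{1+\eps<R\leq 2}]
\;+\; \sum_{\ell=2}^{\infty} E[f(D)\cdot \mathbb{1}_{\ell<R\leq \ell+1}].
\]
On each piece I apply Claim~\ref{lemma:bounding_via_R}: when $R\leq \ell+1$ we have $f(D) \leq 2d(\ell+1)\mO$. Combined with $Pr[F=0]\leq d\eps$ from Claim~\ref{lemma:prob_F} for the first term, and $Pr[R>\ell]\leq d\eps^3/(\ell-1)^2$ from Claim~\ref{lemma:prob_l} for the tail, this yields
\[
E[f(D)\cdot \mathbb{1}_{F=0}]
\;\leq\; 4d\mO\cdot d\eps \;+\; 2d\mO\sum_{\ell=2}^{\infty}(\ell+1)\,Pr[\ell<R\leq \ell+1].
\]

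The main obstacle is the second sum: if one naively bounded $Pr[\ell<R\leq \ell+1]\leq d\eps^3/(\ell-1)^2$ term-by-term, the resulting series $\sum (\ell+1)/(\ell-1)^2$ diverges. I would handle this by Abel summation (summation by parts), writing $Pr[\ell<R\leq \ell+1] = p_\ell - p_{\ell+1}$ with $p_\ell = Pr[R>\ell]$, which telescopes into
\[
\sum_{\ell=2}^{\infty}(\ell+1)(p_\ell - p_{\ell+1}) \;=\; 3p_2 + \sum_{\ell=3}^{\infty} p_\ell \;\leq\; 3d\eps^3 + d\eps^3\sum_{k\geq 2}\frac{1}{k^2} \;=\; O(d\eps^3).
\]
Plugging back gives $E[f(D)\cdot \mathbb{1}_{F=0}] \leq O(d^2\eps)\mO + O(d^2\eps^3)\mO = O(\eps)\mO$ (treating $d$ as a fixed constant). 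Combining with $\mO\leq 5 E[f(D)]$, I obtain $E[f(D)\cdot \mathbb{1}_{F=0}] \leq O(\eps)\cdot E[f(D)]$, and therefore $E[f(D')] \geq (1-O(\eps))\,E[f(D)]$, as required.
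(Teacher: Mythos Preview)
Your proof is correct and follows the same overall strategy as the paper: decompose the event $\{F=0\}$ according to the value of $R$, bound $f(D)$ on each piece via Claim~\ref{lemma:bounding_via_R}, bound the probability of each piece via Claims~\ref{lemma:prob_F} and~\ref{lemma:prob_l}, and compare the resulting $O(\eps)\mO$ term to $E[f(D)]$ using the hypothesis $E[f(D)]\geq \mO/5$.

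The one genuine difference is the choice of partition. The paper slices the range of $R$ \emph{dyadically}, taking events $\{2^\ell \leq R < 2^{\ell+1}\}$; then the $f$-bound is $O(d\cdot 2^{\ell+1})\mO$ while the probability is $O(d\eps^3/2^{2(\ell-1)})$, and the product is a geometric series in $\ell$, so convergence is immediate with no further manipulation. You instead slice into unit intervals $\{\ell < R \leq \ell+1\}$, which forces you to confront the divergent-looking series $\sum (\ell+1)/(\ell-1)^2$; you then repair this with a summation-by-parts step, which is valid (the boundary term $(N+1)p_{N+1}\to 0$ since $p_{N+1}=O(N^{-2})$). Both routes land on the same $O(d^2\eps)\mO$ bound. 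The dyadic decomposition is the more economical device here, but your Abel-summation argument is a perfectly sound alternative and arguably makes more transparent that only the \emph{tail} probabilities $Pr[R>\ell]$ are needed, not the individual shell probabilities.
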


\begin{proof}
By Claims \ref{lemma:prob_F} and \ref{lemma:prob_l}, we have that
\begin{eqnarray*}
E[f(D)]    &=&
E\left[f(D) | ~ F=1\right] \cdot \Prb{F=1} +
 E\left[f(D) |~ F=0 \land (R < 2) \right]\cdot \Prb{F=0 \land (R   < 2)} \\
   &+& \sum_{\ell \geq 1}  E\left[f(D)|~  F=0 \land (2 ^\ell \leq R <  2^{\ell+1})\right]
        \cdot
         \Prb{F=0 \land (2 ^\ell \leq R < 2^{\ell+1})} \\
 &\leq& E[f(D) |~ F=1] \cdot \Prb{F=1} + 4d^2 \eps \cdot \mO
  + ~d^2 \eps ^3 \cdot \mO \cdot \sum_{\ell \geq 1}  \frac{2^{\ell+2}}
      { (2^{\ell-1})^2 }. \\
 \end{eqnarray*}
Since the last summation is a constant, and $E[f(D)]\geq \mO/2$, we have that
 $$E[F(D)] \leq E[f(D) | F=1 ] \Prb{F=1} + \eps \cdot c \cdot E[F(D)],$$
where $c>0$ is some constant. It follows that
 $$(1-O(\eps)) E[f(D)] \leq E[f(D) | F=1 ] \cdot \Prb{F=1}.$$
Finally, since $D' = D$ if $F=1$ and $D' =0$ otherwise, we have that
$$E[f(D')] = E[f(D)|F=1] \cdot \Prb{F=1} \geq (1- O(\eps)) E[f(D)]. $$
 \end{proof}
By definition, $D'$ is always $\eps$-nearly feasible, and $D'\in \mM$. This
completes the proof of the theorem.
\end{dl_thm_proof}

\subsection{Rounding Instances with No Big Elements}
\label{sec:random_mcmb}
In this section we present an $(\alpha - O(\eps))$-approximation
algorithm for ${\SUB}$ inputs with no big elements, given an
 $\alpha$-approximate solution for the continuous problem.
 Inputs with no big elements are easier to tackle. Indeed, any nearly feasible solution
 for such input can be converted to a feasible one, with only
 a small harm to the total value.

\begin{lemma}
\label{lemma:nearly_fix}
Let $S\subseteq U$ be an $\eps$-nearly feasible solution
with no big elements,
then $S$ can
be converted in polynomial time to a feasible solution $S' \subseteq S$,
such that $f(S') \geq \left(1-O(\eps) \right) f(S)$.
\end{lemma}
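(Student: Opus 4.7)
The plan is to restore feasibility one knapsack constraint at a time. Let $S_0 := S$; in iteration $r = 1,\ldots, d$, I would shrink the current set $S_{r-1}$ to a subset $S_r \subseteq S_{r-1}$ satisfying $\bc_r(S_r) \leq L_r$, while losing at most an $O(\eps)$-fraction of $f(S_{r-1})$. Feasibility in earlier dimensions is preserved automatically, since subsequent removals only decrease costs. As $d$ is a fixed constant, after all $d$ iterations the output $S' := S_d$ will be feasible and satisfy $f(S') \geq (1-O(\eps))f(S)$.

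For iteration $r$, if $\bc_r(S_{r-1}) \leq L_r$ I set $S_r := S_{r-1}$; otherwise I would partition $S_{r-1}$ into $K := \lceil 1/(3\eps)\rceil$ groups $S^{(r)}_1,\ldots, S^{(r)}_K$ using a largest-first (LPT) load-balancing scheme on dimension $r$. A standard inductive argument on LPT shows that the gap between the most- and least-loaded groups is bounded by the largest item, giving $\min_j \bc_r(S^{(r)}_j) \geq \bc_r(S_{r-1})/K - \max_{i \in S_{r-1}} \bc_r(i) \geq \bc_r(S_{r-1})/K - \eps^3 L_r$. Since $\bc_r(S_{r-1}) > L_r$ and $K \leq 1/(3\eps)$, the right-hand side is at least $3\eps L_r - \eps^3 L_r \geq 2\eps L_r \geq \bc_r(S_{r-1}) - L_r$ (recall $\bc_r(S_{r-1}) \leq (1+\eps)L_r$), so removing any single group $S^{(r)}_j$ restores feasibility in dimension $r$.

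To choose which group to remove, I would apply a submodular averaging argument. Order the groups arbitrarily and set $T_j := S^{(r)}_1 \cup \cdots \cup S^{(r)}_j$ with $T_0 := \emptyset$. Since $T_{j-1} \subseteq S_{r-1}\setminus S^{(r)}_j$ and $S^{(r)}_j$ is disjoint from $S_{r-1}\setminus S^{(r)}_j$, the diminishing-returns property for sets of elements (valid for any submodular $f$, monotone or not) gives
$$f(T_j) - f(T_{j-1}) \;=\; f_{T_{j-1}}(S^{(r)}_j) \;\geq\; f_{S_{r-1}\setminus S^{(r)}_j}(S^{(r)}_j) \;=\; f(S_{r-1}) - f(S_{r-1}\setminus S^{(r)}_j).$$
Telescoping the left-hand side and using $f(\emptyset) \geq 0$ yields $\sum_{j=1}^K \bigl[f(S_{r-1}) - f(S_{r-1}\setminus S^{(r)}_j)\bigr] \leq f(S_{r-1})$. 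Hence some $k_r^*$ satisfies $f(S_{r-1}\setminus S^{(r)}_{k_r^*}) \geq (1 - 1/K)f(S_{r-1}) \geq (1-3\eps)f(S_{r-1})$, and I set $S_r := S_{r-1}\setminus S^{(r)}_{k_r^*}$. After $d$ iterations, $f(S_d) \geq (1-3\eps)^d f(S) \geq (1-3d\eps)f(S) = (1-O(\eps))f(S)$.

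The main obstacle is the partition guarantee: I need the LPT imbalance $\eps^3 L_r$ to be strictly smaller than the slack between the average per-group cost $\bc_r(S_{r-1})/K \geq 3\eps L_r$ and the overflow $\bc_r(S_{r-1}) - L_r \leq \eps L_r$. This is precisely where the paper's small/big dichotomy ($\bc(i) \leq \eps^3 \bL$) enters in an essential way — an element whose cost in some dimension is merely $\eps \bL$ or $\eps^2 \bL$ would not be small enough to make the LPT argument go through. The submodular averaging step, in contrast, is clean and uses only diminishing returns together with $f(\emptyset)\geq 0$, so it applies uniformly to monotone and non-monotone $f$.
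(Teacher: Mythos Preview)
Your proof is correct and follows essentially the same route as the paper: fix one dimension at a time, partition the current set into $\Theta(1/\eps)$ groups each carrying $\Theta(\eps L_r)$ of dimension-$r$ cost (the paper does not name LPT explicitly but relies on the same small-item guarantee to assert such a partition), and then invoke the submodular inequality $f(S)\ge\sum_j f_{S\setminus S_j}(S_j)$---which you re-derive inline via diminishing returns and telescoping---to pick a removable group. The only blemish is the inequality ``$K\le 1/(3\eps)$'', which goes the wrong way for $K=\lceil 1/(3\eps)\rceil$; taking $K=\lfloor 1/(3\eps)\rfloor$ (or loosening the constant) fixes this immediately without affecting the argument.
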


\begin{proof}
In fixing the solution $S$ we handle each dimension separately.
For any dimension $ 1\leq r \leq d$, if $c_r(S)\leq L_r$  then
no modification is needed; otherwise, $c_r(S) > L_r$. Since all elements
in $S$ are small, we can partition $S$ into $\ell$ disjoint subsets $S_1,S_2,\ldots,S_\ell$
such that $\eps L_r \leq c_r(S_j) < (\eps +\eps^3) L_r$ for any
 $1\leq j \leq \ell$, where $\ell = \Omega( \eps ^{-1})$.
Since the function $f$ is submodular, by Lemma \ref{lemma:submodular_summation_inequality},
 we have that
$ f(S) \geq \sum_{j=1}^{\ell} f_{S \setminus S_j} (S_j)$. Hence, there exists a value $j \in \{ 1, 2 \ldots , \ell \}$ such that $f_{S \setminus
S_j} (S_j) \leq \frac{f(S)}{\ell}= f(S) \cdot O(\eps)$ (note
that $f_{S \setminus S_j} (S_j)$ may be negative). Now, $c_r(S \setminus S_j) \leq
L_r$, and $f(S \setminus S_j) \geq (1-O(\eps)) f(S)$. We repeat this step for all $1 \leq r \leq d$ to obtain a feasible set $S'$ satisfying
$f(S') \geq (1-O(\eps)) f(S)$.
\end{proof}
Combined with Theorem \ref{thm:prb_claim},
we have the following rounding algorithm. \\ \\
\noindent
{\bf Randomized Rounding Algorithm for $\SUB$ with No Big Elements} \\
\noindent
{\bf Input:} A $\SUB$ instance, a feasible solution  $\bar{x}$ for the continuous problem, with
$F(\bar{x}) \geq \mO /5$.
\begin{enumerate}
\item Define a random set $D \sim \bx$. Let $D'=D$ if
$D$ is $\eps$-nearly feasible, and $D'=\emptyset$ otherwise.
\item Convert $D'$ to a feasible
set $D''$ as in the proof of Lemma \ref{lemma:nearly_fix} and return $D''$.
\end{enumerate}

Clearly, the algorithm returns a feasible solution for the problem.
By Theorem \ref{thm:prb_claim}, $E[f(D')] \geq (1- O(\eps)) F(\bar{x})$. By
Lemma~\ref{lemma:nearly_fix},
$E[f(D'')] \geq (1- O(\eps)) F(\bar{x})$. Hence, we have

\begin{lemma}
\label{thm:nobig_alg}
For any instance of $\SUB$ with no big elements,
any feasible solution $\bar{x}$ for the continuous problem with $F(\bar{x}) \geq \mO /5$
can be converted to a feasible solution for $\SUB$ in polynomial running time
with expected profit at least $(1-O(\eps))\cdot F(\bar{x})$.
\end{lemma}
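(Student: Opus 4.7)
The plan is to simply analyze the two–step Randomized Rounding Algorithm stated immediately before the lemma, composing the guarantee of Theorem \ref{thm:prb_claim} with the fixing guarantee of Lemma \ref{lemma:nearly_fix}. The randomized algorithm already defines a feasible set $D''$, so the polynomial running time is clear; what remains is to bound $E[f(D'')]$ from below.

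First, I would verify that sampling $D \sim \bar{x}$ meets the hypotheses of Theorem \ref{thm:prb_claim} (taking $\mM = 2^U$, so the structural constraint is trivial). Write $D = \bigcup_{i \in U} D_i$, where $D_i = \{i\}$ with probability $x_i$ and $D_i = \emptyset$ otherwise, with the $D_i$ mutually independent. Then $c_r(D) = \sum_{i \in U} c_r(D_i)$ is a sum of independent random variables, giving hypothesis~3. Because the instance has no big elements, $c_r(D_i) \leq c_r(i) \leq \eps^3 L_r$ deterministically, giving hypothesis~4. Feasibility of $\bar{x}$ gives $E[c_r(D)] = c_r(\bar{x}) \leq L_r$ (hypothesis~2), and the assumption $F(\bar{x}) \geq \mO/5$ combined with $E[f(D)] = F(\bar{x})$ gives hypothesis~1.

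Applying Theorem \ref{thm:prb_claim} to this $D$ yields that the truncated variable $D'$ (equal to $D$ if $D$ is $\eps$-nearly feasible, and $\emptyset$ otherwise) is always $\eps$-nearly feasible and satisfies
\[
E[f(D')] \geq (1 - O(\eps))\, E[f(D)] = (1 - O(\eps))\, F(\bar{x}).
\]
Next, for every realization of $D'$, all elements of $D'$ are small (since the instance has no big elements), so Lemma \ref{lemma:nearly_fix} applies deterministically to any fixed realization and produces $D'' \subseteq D'$ which is feasible and satisfies $f(D'') \geq (1 - O(\eps))\, f(D')$ in polynomial time. Taking expectations preserves the inequality, so $E[f(D'')] \geq (1 - O(\eps))\, E[f(D')] \geq (1 - O(\eps))\, F(\bar{x})$, as required.

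There is no genuine obstacle here: the work has all been done in Theorem \ref{thm:prb_claim} and Lemma \ref{lemma:nearly_fix}, and the "no big elements" assumption is precisely what allows condition~4 of Theorem \ref{thm:prb_claim} to hold for the natural independent decomposition of $D$. The only point one must be careful about is that the deterministic fixing step of Lemma \ref{lemma:nearly_fix} needs to be invoked pointwise on realizations before taking expectations, and that the two multiplicative $(1-O(\eps))$ factors compose to another $(1-O(\eps))$ factor.
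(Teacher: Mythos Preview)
Your proposal is correct and follows essentially the same approach as the paper: apply Theorem~\ref{thm:prb_claim} (or equivalently its special case, Lemma~\ref{lemma:expected_random}) to bound $E[f(D')]$, then apply Lemma~\ref{lemma:nearly_fix} pointwise to pass to $E[f(D'')]$. Your explicit verification of hypotheses 1--4 of Theorem~\ref{thm:prb_claim} via the decomposition $D_i \in \{\emptyset,\{i\}\}$ and the no-big-elements assumption is exactly the content the paper leaves implicit in its one-line appeal to the theorem.
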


\subsection{A Randomized Approximation Algorithm}
\label{sec:mcmb}

Given an instance of $\SUB$ and a subset $T \subseteq U$,
define another instance of $\SUB$, to which we refer as the {\em residual
problem with respect to} $T$, with $f$ remaining the objective function.
The budget for the residual problem is
${\bL}'={\bL}-{\bc}(T)$,
and the universe $U'$ consists of all elements $i \in U \setminus T$
such that $\bc(i) \leq \eps^3 \bL'$,
and all elements in $T$.
Formally,
$$U' = T \cup \left\{ i\in U\setminus T |~ \bc(i) \leq \eps^3 \bL'\right\}.
$$
The new cost of element $i$ is $c'(i)=c(i)$ for any $i\in U' \setminus T$, and
$c'(i) =0 $ for any $i\in T$. It follows that there
are no big elements in the residual problem.
Let $S$ be a feasible solution for the residual problem with respect
to $T$. Then $\bc(S) \leq \bc'(S) +\bc(T) \leq \bL' +\bc(T) =\bL$. Thus,
any feasible solution for the residual problem is also feasible for
the original instance.

Consider the following algorithm.
\\ \\
\noindent
{\bf A Randomized Approximation Algorithm for $\SUB$}\\
\noindent
{\bf Input:} A $\SUB$ instance and an $\alpha$-approximation algorithm
 $\cA$ for continuous $\SUB$ with respect to
the function $f$.
\begin{enumerate}
\item For any $T \subseteq U$ such that $|T| \leq h=\ceil{d \cdot \eps^{-4}}$
\begin{enumerate}
\item
Use $\cA$ to obtain an $\alpha$-approximate solution $\bx$ for the
continuous residual problem with respect to $T$.
\item Use the Randomized Rounding Algorithm of Section~\ref{sec:random_mcmb}
to convert $\bx$ to a feasible solution $S$ for
the residual problem.
\end{enumerate}
\item Return the best solution found.
\end{enumerate}

\begin{lemma}
The above approximation algorithm returns an
$(\alpha -O(\eps))$-approximate solution for $\SUB$
and uses a polynomial number of calls to algorithm $\cA$.
\end{lemma}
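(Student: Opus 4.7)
The plan is to handle the number of calls and the approximation ratio separately. The first is immediate: the outer loop enumerates all subsets $T \subseteq U$ with $|T| \leq h$, giving $\sum_{k=0}^{h} \binom{|U|}{k} = O(|U|^h)$ iterations, which is polynomial since $h = \lceil d\eps^{-4}\rceil$ is a constant depending only on $d$ and $\eps$; each iteration performs one call to $\cA$ followed by one polynomial-time invocation of the randomized rounding procedure of Section~\ref{sec:random_mcmb}.

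For the approximation guarantee, let $\mO^*$ be an optimal integer solution with $f(\mO^*) = \mO$. The strategy is to exhibit a particular ``good guess'' $T^* \subseteq \mO^*$ of size at most $h$, which the enumeration is sure to consider, such that the residual problem with respect to $T^*$ has (i) no big elements and (ii) an integer optimum of value at least $\mO$. Granted this, the residual's continuous optimum is also at least $\mO$, so $\cA$ returns a fractional $\bx$ with $F(\bx) \geq \alpha \mO$; Lemma~\ref{thm:nobig_alg} then converts $\bx$ into a feasible integral solution with expected value at least $(1-O(\eps))F(\bx) \geq (\alpha - O(\eps))\mO$, and taking the best over all enumerated $T$ finishes the proof.

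The construction I propose for $T^*$ is: for each dimension $1 \leq r \leq d$, include the $\lceil \eps^{-4}\rceil$ elements of $\mO^*$ with the largest values of $c_r(\cdot)$ (or all of $\mO^*$ if it is smaller), and let $T^*$ be the union over $r$. Then $|T^*| \leq d\lceil \eps^{-4}\rceil \leq h$, and since $c_r(\mO^*) \leq L_r$, the smallest element in the top-$\lceil \eps^{-4}\rceil$ list for dimension $r$ has $c_r$-value at most $\eps^4 L_r$, so every $i \in \mO^* \setminus T^*$ satisfies $c_r(i) \leq \eps^4 L_r$ for every $r$. Feasibility of $\mO^*$ in the residual (condition (ii)) is immediate, since $\bc(\mO^* \setminus T^*) \leq \bL - \bc(T^*) = \bL'$ and the residual costs on $T^*$ are reset to zero.

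The main obstacle is confirming (i), i.e., that each such $i$ actually lies in $U'$: one needs $c_r(i) \leq \eps^3 L'_r$ for every $r$. When $L'_r \geq \eps L_r$ this follows from $c_r(i) \leq \eps^4 L_r = \eps^3 (\eps L_r) \leq \eps^3 L'_r$. The delicate case is when $c_r(T^*) > (1-\eps)L_r$ for some dimension, shrinking $L'_r$ enough to spoil the above estimate. I plan to absorb this by augmenting $T^*$ iteratively: while some $i \in \mO^* \setminus T^*$ violates smallness in the residual, move it into $T^*$. A geometric-shrinkage argument---each addition charged to dimension $r$ reduces $L_r - c_r(T^*)$ by a factor of $(1-\eps^3)$, while the total $c_r$-mass of added elements is bounded by $c_r(\mO^*) \leq L_r$---should keep $|T^*|$ within $h$, possibly after adjusting the initial top-$\lceil \eps^{-4}\rceil$ quota. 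Once $T^*$ is refined in this way, both (i) and (ii) hold, and the remainder of the argument proceeds as sketched above.
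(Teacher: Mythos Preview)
Your plan diverges from the paper's proof in a crucial way, and the divergence is fatal. You try to choose $T^*\subseteq\mO^*$ by \emph{cost} so that every element of $\mO^*\setminus T^*$ lands in the residual universe $U'$ (i.e., is small with respect to $\bL'=\bL-\bc(T^*)$). You then hope to argue that the residual optimum is at least $\mO$. But there are instances in which \emph{no} subset $T^*\subseteq\mO^*$ of size at most $h$ achieves $\mO^*\subseteq U'$, so your iterative ``augment $T^*$'' fix cannot terminate within the budget $h$. Concretely, take $d=1$ and let $\mO^*=\{i_1,\ldots,i_N\}$ with $c(i_j)=L\,\eps^3(1+\delta)(1-\eps^3)^{j-1}$, where $\delta>0$ is chosen so that $\sum_j c(i_j)=L$ (i.e., $(1+\delta)(1-(1-\eps^3)^N)=1$). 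For any proper $T^*\subsetneq\mO^*$, let $j^*$ be the smallest index with $i_{j^*}\notin T^*$. A short calculation gives $L-c(T^*)\le L-\sum_{j\ne j^*}c(i_j)=c(i_{j^*})$ while $c(i_{j^*})>\eps^3\cdot c(i_{j^*})\ge \eps^3(L-c(T^*))$ only when the slack $L-c(\mO^*)$ is zero; with $\delta>0$ as chosen, equality is tight enough that $c(i_{j^*})>\eps^3(L-c(T^*))$ for every choice of $T^*\subsetneq\mO^*$. Hence the only workable $T^*$ is all of $\mO^*$, and $N$ can be taken arbitrarily large. Your ``geometric-shrinkage'' heuristic does not help: each augmentation shrinks $L'_r$ by a factor $(1-\eps^3)$, but that is exactly the rate at which the remaining costs shrink, so the process chases its tail.

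The paper avoids this trap by choosing $T$ according to \emph{profit}, not cost: it takes $T=K_h$, the $h$ elements of $\mO^*$ with largest successive marginal values $f_{K_{\ell-1}}(\{i_\ell\})$. It then \emph{accepts} that some elements $R=\mO^*\setminus U'$ may be excluded from the residual universe, and bounds the damage: $|R|\le d\eps^{-3}$ because each element of $R$ is big in at least one coordinate of the residual budget, and for every $j\in\mO^*\setminus T$ one has $f_T(\{j\})\le f(T)/|T|$ by the greedy ordering, so $f_T(R)\le d\eps^{-3}\cdot f(T)/h=\eps f(T)\le\eps\mO$. Submodularity then gives $f_{\mO'}(R)\le f_T(R)$ for $\mO'=\mO^*\cap U'$, whence $f(\mO')\ge(1-\eps)\mO$. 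This is the missing idea in your attempt: you must control the \emph{value} lost to excluded elements, not force the excluded set to be empty.
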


\begin{proof}
By Lemma~\ref{thm:nobig_alg},
in each iteration the algorithm finds a feasible solution $S$ for the residual problem.
Hence, the algorithm always
returns a feasible solution for the given $\SUB$ instance.

Let $\mO=\{ i_1,\ldots ,i_k \}$  be an
optimal solution for the input $I$ (we use $\mO$ to denote both an
optimal sub-collection of elements and the optimal value).
For $\ell \geq 1$, let
${\bK}_\ell = \{ i_1,\ldots,i_\ell \}$,
and assume that the elements  are ordered by their residual profits, i.e.,
$i_\ell = \argmax_{i\in \mO \setminus  \bK_{\ell-1}}
f_{\bK_{\ell-1}} (\{i\})$.

Consider the iteration in which $T=\bK_h$,
and define  $\mO'= \mO \cap U'$. The set $\mO'$ is clearly a feasible solution
for the residual problem with respect to $T$.
We show a lower bound for $f(\mO')$.
The set $R=\mO \setminus \mO'$
consists of elements in $\mO \setminus T$ that are big with
respect to the residual instance. The total cost of elements in $R$ is
bounded by $\bL'$ (since $\mO$ is a feasible solution), and thus
$|R| \leq \eps^{-3} \cdot d$.

Since $T=\bK_h$, for any $j \in \mO \setminus T$ it holds that
$f_T(j) \leq \frac{f(T)}{|T|}$, and we get
$f_T(R) \leq \sum_{j\in R} f_T(\{j\}) \leq \eps^{-3} \cdot d \frac{f(T)}{|T|}=
\eps f(T) \leq \eps \mO$.
Thus,  $f_{\mO'} (R) \leq f_T(R) \leq \eps \mO$.
Since $f(\mO)= f(\mO') + f_{\mO'}(R) \leq f(\mO') +\eps f(\mO)$, we have that
$f(\mO') \geq (1-\eps) f(\mO)$.

Thus, in this iteration we get a solution $\bar{x}$ for the residual problem
with $F(\bar{x}) \geq \alpha (1-\eps)f(\mO)$, and the solution $S$ obtained after the
rounding satisfies $f(S) \geq (1-O(\eps)) \alpha f(\mO)$.

\end{proof}
We summarize in the next result.

\begin{theorem}
\label{thm:continuous_eq} Let $f$ be a submodular function, and
suppose there is a polynomial time $\alpha$-approximation
algorithm for the continuous problem with respect to $f$. Then
there is a polynomial time randomized
$(\alpha-\eps)$-approximation algorithm for $\SUB$ with respect to
$f$, for any $\eps>0$.
\end{theorem}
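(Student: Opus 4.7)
The proof is essentially already assembled by the preceding lemma: the Randomized Approximation Algorithm of Section~\ref{sec:mcmb} returns an $(\alpha - O(\eps))$-approximate solution using polynomially many calls to $\cA$. My plan is to derive Theorem~\ref{thm:continuous_eq} from that lemma by rescaling $\eps$, and to verify that the promised polynomial running time is preserved.

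First, I would fix the target error $\eps > 0$ and let $C$ be the hidden constant in the $O(\eps)$ expression from the preceding lemma (this constant is absolute; it is determined by the sums appearing in Claim~\ref{lemma:mcmb_nearly_feasible} together with the fixing loss from Lemma~\ref{lemma:nearly_fix}). Setting $\eps' = \eps / C$, I would then run the Randomized Approximation Algorithm of Section~\ref{sec:mcmb} with parameter $\eps'$ and with the hypothesized algorithm $\cA$. By the preceding lemma, the returned set $S$ is feasible and satisfies
\[
E[f(S)] \;\geq\; \bigl(\alpha - O(\eps')\bigr)\, \mO \;\geq\; (\alpha - \eps)\, \mO.
\]

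Next I would confirm the running time. The outer loop enumerates subsets $T \subseteq U$ with $|T| \leq h = \lceil d \cdot (\eps')^{-4}\rceil$; since $d$ and $\eps$ are constants, $h$ is a constant, so the number of such subsets is $|U|^{O(h)}$, which is polynomial in $|U|$. Each iteration performs one call to $\cA$, one randomized rounding step (drawing $D \sim \bx$ and testing $\eps'$-near feasibility), and one application of the fixing procedure of Lemma~\ref{lemma:nearly_fix}, each of which runs in polynomial time. Hence the overall algorithm is polynomial and uses a polynomial number of calls to $\cA$, yielding the desired $(\alpha - \eps)$-approximation in expectation.

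The only delicate point is ensuring that the hypothesis $F(\bar{x}) \geq \mO/5$ needed by Theorem~\ref{thm:prb_claim} (and hence by the Randomized Rounding Algorithm) is in force. In the iteration where $T = \bK_h$, the proof of the preceding lemma gives $F(\bar{x}) \geq \alpha(1 - \eps') f(\mO)$, so this hypothesis holds whenever $\alpha \geq 1/5 + \Theta(\eps)$, which covers the regime of interest ($\alpha = 1/4$ and $\alpha = 1 - e^{-1}$). I expect this to be the only subtle checkpoint; everything else is a clean invocation of the machinery already developed, combined with the standard trick of running the framework with a scaled-down error parameter.
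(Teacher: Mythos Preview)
Your proposal is correct and matches the paper's approach exactly: the paper states Theorem~\ref{thm:continuous_eq} as an immediate summary of the preceding lemma (``We summarize in the next result''), and your write-up just makes the implicit $\eps$-rescaling and running-time check explicit. One remark on your last paragraph: the $\mO$ appearing in the hypothesis $F(\bar{x})\ge \mO/5$ of Theorem~\ref{thm:prb_claim} is the optimum of the \emph{residual} instance, and since $\cA$ is an $\alpha$-approximation for that instance we directly have $F(\bar{x})\ge \alpha\cdot \mO_{\mathrm{res}}$; moreover the constant $1/5$ in Theorem~\ref{thm:prb_claim} is not sharp (the proof of Claim~\ref{lemma:mcmb_nearly_feasible} only needs $E[f(D)]\ge c\cdot \mO$ for some fixed $c>0$), so the argument goes through for any constant $\alpha>0$, not just $\alpha\ge 1/5$.
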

Since there is a $(1/4-o(1))$-approximation algorithm
for general instances of continuous ${\SUB}$
 \cite{LMNS09},
we have

\begin{theorem}
\label{thm:rand_alg_non_monotone}
There is a polynomial time randomized $(1/4-\eps)$-approximation
algorithm for $\SUB$, for any  $\eps>0$.
\end{theorem}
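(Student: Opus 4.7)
The plan is to obtain this as an essentially immediate corollary of Theorem~\ref{thm:continuous_eq} combined with the continuous-relaxation algorithm of Lee et al.~\cite{LMNS09}. Fix an arbitrary $\eps>0$. First, I would invoke the algorithm of \cite{LMNS09}, which, for any non-negative (possibly non-monotone) submodular function $f$, produces in polynomial time a feasible fractional solution $\bar{x}$ to continuous $\SUB$ of value at least $(1/4 - \delta)\mO$, for any $\delta>0$ we choose. Setting $\delta = \eps/2$ makes this a polynomial-time $(1/4 - \eps/2)$-approximation algorithm $\cA$ for continuous $\SUB$.

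Next, I would apply Theorem~\ref{thm:continuous_eq} with $\alpha = 1/4 - \eps/2$ and error parameter $\eps/2$. The theorem then yields a polynomial-time randomized $((1/4 - \eps/2) - \eps/2) = (1/4 - \eps)$-approximation for $\SUB$. Since the Randomized Approximation Algorithm of Section~\ref{sec:mcmb} calls $\cA$ a polynomial number of times (once per subset $T$ with $|T|\leq h = \lceil d\eps^{-4}\rceil$, yielding $|U|^{O(d/\eps^4)}$ iterations for fixed $d$ and $\eps$), and each call to $\cA$ runs in polynomial time, the overall algorithm is polynomial.

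There is essentially no obstacle here: the machinery of Section~\ref{sec:mcmb} (the probabilistic theorem, the fixing procedure for small elements, and the enumeration over the top-profit prefix $\bK_h$ of $\mO$) has already been assembled and encapsulated in Theorem~\ref{thm:continuous_eq}. The only care needed is in the bookkeeping of the two $\eps$'s, as above, and in verifying that the hypothesis $F(\bar{x}) \geq \mO/5$ required by the rounding step is satisfied: this holds because any $\alpha$-approximation with $\alpha \geq 1/5$ on the residual instance automatically guarantees it after the enumeration step, where we have shown $f(\mO')\geq (1-\eps)f(\mO)$ so the continuous optimum of the residual instance is close to $\mO$.
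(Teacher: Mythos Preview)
Your proposal is correct and takes essentially the same approach as the paper: the theorem is obtained directly by plugging the $(1/4 - o(1))$-approximation of Lee et al.~\cite{LMNS09} for continuous $\SUB$ into Theorem~\ref{thm:continuous_eq}. The additional bookkeeping you provide (splitting $\eps$, counting iterations, and checking the $F(\bar{x}) \geq \mO/5$ hypothesis) is fine but more detailed than what the paper states.
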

Since there is a $(1-e^{-1}-o(1))$ approximation algorithm for $\SUB$ with monotone objective function \cite{Vo08}
we have

\begin{theorem}
There is a polynomial time randomized $(1-e^{-1}-\eps)$-approximation
algorithm for $\SUB$ with monotone objective function, for any  $\eps>0$.
\end{theorem}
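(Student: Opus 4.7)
The plan is to derive this theorem as a direct corollary of Theorem~\ref{thm:continuous_eq}, which has already done all the heavy lifting of converting a fractional approximation to an integral one. Since Theorem~\ref{thm:continuous_eq} operates as a black box reduction—given any polynomial-time $\alpha$-approximation for the continuous relaxation with respect to $f$, it produces a randomized $(\alpha-\eps)$-approximation for $\SUB$ with respect to $f$—the entire argument reduces to pointing at a known continuous algorithm of the right quality for the monotone case.

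The key step is thus to invoke Vondrák's continuous greedy algorithm \cite{Vo08}, which returns a fractional $\bar{y} \in P=\{\bar{y} \in [0,1]^U \mid \sum_i y_i \bc(i) \leq \bL\}$ satisfying $F(\bar{y}) \geq (1-e^{-1}-o(1))\cdot \mO$, where $\mO$ is the value of an optimal integral solution to $\SUB$. Since $P$ is a down-closed solvable polytope (membership and linear optimization over $P$ are trivial), the analysis of continuous greedy applies, giving a polynomial-time $(1-e^{-1}-o(1))$-approximation for continuous $\SUB$ in the monotone case. Denote this approximation ratio by $\alpha$.

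To finish, set $\alpha'=1-e^{-1}$ and let $\eps>0$ be given. Choose internal precision parameters for the continuous greedy small enough that its additive loss $o(1)$ is at most $\eps/2$, so it serves as an $(\alpha'-\eps/2)$-approximation. Then apply Theorem~\ref{thm:continuous_eq} with error parameter $\eps/2$ to obtain a randomized $(\alpha'-\eps)$-approximation for $\SUB$ with respect to $f$, in polynomial time.

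There is essentially no obstacle here, because all of the genuine technical difficulty—the concentration argument in Theorem~\ref{thm:prb_claim}, the nearly-feasible fixing in Lemma~\ref{lemma:nearly_fix}, and the profit-enumeration preprocessing—is already encapsulated in Theorem~\ref{thm:continuous_eq}. The only mild point to check is the routine absorption of the $o(1)$ slack in Vondrák's algorithm into the constant $\eps$, which is immediate from the algorithm's explicit dependence of runtime on its precision parameter.
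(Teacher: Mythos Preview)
Your proposal is correct and matches the paper's own argument essentially line for line: the paper simply cites Vondr\'{a}k's $(1-e^{-1}-o(1))$-approximation for the continuous problem \cite{Vo08} and plugs it into Theorem~\ref{thm:continuous_eq}. Your only addition is the explicit $\eps/2$ splitting to absorb the $o(1)$ term, which the paper leaves implicit.
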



\section{A Deterministic Approximation Algorithm}
\label{mcmb:deter}
In this section we show how the
algorithm of Section \ref{sec:random_mcmb} can be derandomized,
assuming we have
an oracle for $F$, the extension by expectation of $f$.
For some families
of submodular functions, $F$ can
be directly evaluated; for a general function $f$, $F$ can be
evaluated with high accuracy by sampling $f$, as in \cite{Vo08}.

The main idea is to reduce the number of fractional entries in the
fractional solution $\bx$, so that the number of values a random set $D \sim \bx$
can get is polynomial in the input size (for a fixed value of $\eps$).
Then,
we go over all the possible values, and we are promised to obtain a solution
of high value.

A key tool in our derandomization is the \emph{pipage rounding}
technique of Ageev and Sviridenko \cite{as04}. We give below a brief overview
of the technique.
For any element $i\in U$, define the unit vector $\bar{i}\in \{0,1\}^U$, in which
${i}_j=0$
for any $j \neq i$, and ${i}_i=1$.
Given a fractional solution $\bx$ for the problem
and two elements $i,j$, such that $x_i$ and $x_j$ are both
fractional, consider the vector function
$\bx_{i,j}(\delta)= \bx + \delta \bar{i} - \delta \bar{j}$
(Note that $\bx_{i,j}(\delta)$ is equal to $\bx$ in all entries
except $i,j$).
Let $\delta_{\bx,i,j}^{+}$  and  $\delta_{\bx,i,j}^{-}$ (for short, $\delta^{+}$ and  $\delta^{-}$)
be the maximal and minimal value of $\delta$ for which $\bx_{i,j}(\delta) \in [0,1]^U$.
In both $\bx_{i,j}(\delta^{+}), \bx_{i,j}(\delta^{-})$, the entry of either $i$ or $j$ is
integral.

Define  $F^{\bx}_{i,j}(\delta) = F(\bx_{i,j}(\delta))$ over the domain $[\delta^{-},\delta^{+}]$. The
function $F^{\bx}_{i,j}$ is convex (see \cite{ccpv10} for a detailed proof),
thus
$\bx'=\argmax_{\{\bx_{i,j}(\delta^{+}),\bx_{i,j}(\delta^{-})\}} F(\bx)$
has fewer fractional entries than $\bx$, and $F(\bx') \geq F(\bx)$.
By appropriate selection of $i,j$, such that $\bx'$ maintains feasibility (in some sense),
we can repeat the above step
to gradually decrease the number of fractional entries.
We use the technique to prove the next result.

\begin{lemma}
\label{lemma:derand1}
Let $\bx \in [0,1]^U$ be a solution having $k$ or less fractional entries (i.e.,
$|\{i~|~ 0<x_i<1\}| \leq k$),
and $\bc(\bx) \leq \bL$ for some $\bL$. Then $\bx$ can be converted to a vector
$\bx'$ with at most $k'= \left( \frac{8 \ln (2k)}{\eps} \right)^d$
fractional entries,
such that $\bc(\bx')\leq (1+\eps) \bL$,
and $F(\bx') \geq F(\bx)$, in time polynomial in $k$.
\end{lemma}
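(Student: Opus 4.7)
The plan is to apply pipage rounding iteratively in a bucketed fashion: partition the fractional entries of $\bx$ into groups sharing approximately equal cost vectors, then pipage-round within each group until at most one fractional entry remains per group. Within each group, pipage rounding will preserve $F$ (each step picks the endpoint maximizing $F$, which only increases $F$ by convexity of $F^{\bx}_{i,j}(\delta)$ cited in the paper's overview), and will preserve the per-group mass $\sum_{i\in G} x_i$, so that bucket-level cost changes are small. The overall bucket count will be the sole contributor to the remaining fractional entries, and is exactly what must be bounded by $k'=(8\ln(2k)/\eps)^d$.

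For the bucketing, for each coordinate $1\le r\le d$ I discretize $c_r(i)$ into geometric intervals of ratio $1+\eps/8$, so that any two elements in the same $r$-bucket have $c_r$ values within a factor $1+\eps/8$ of each other. Fractional entries whose contribution $x_i c_r(i)$ is below a threshold of order $\eps L_r/k$ are placed in a single auxiliary bucket for coordinate $r$, so that the aggregate contribution from that bucket is $O(\eps L_r)$ regardless of rounding. The joint bucket of an element is the tuple $(b_1(i),\dots,b_d(i))$, and a counting argument---using that $\sum_{i} x_i c_r(i)\le L_r$ forces the relevant range of $c_r$ among non-auxiliary entries to span at most a factor $O(k/\eps)$---shows the number of joint buckets is at most $(8\ln(2k)/\eps)^d=k'$.

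Within each joint bucket $G$, I repeatedly pick any two fractional entries $i,j\in G$, compute $\bx_{i,j}(\delta^+)$ and $\bx_{i,j}(\delta^-)$, and replace $\bx$ by the one of larger $F$-value (accessed via the oracle). Convexity of $F^{\bx}_{i,j}$ ensures $F(\bx')\ge F(\bx)$, and the endpoint choice makes one of $x_i,x_j$ integral, reducing the count of fractional entries in $G$ by at least one. I continue until $|G|$ has at most one fractional entry, which takes at most $|G|$ iterations, for a total of at most $k$ steps overall---polynomial in $k$ using the $F$-oracle. For the cost analysis, the step preserves $x_i+x_j$, so $M_G:=\sum_{i\in G} x_i$ is invariant; if $c_r(i)\in[a,(1+\eps/8)a]$ on $G$, then $\sum_{i\in G} x_i c_r(i)$ always lies in $[aM_G,(1+\eps/8)aM_G]$, so the per-bucket change in coordinate $r$ is bounded by $(\eps/8)\sum_{i\in G} x_i c_r(i)$. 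Summing over ordinary buckets yields total change at most $(\eps/8)c_r(\bx)\le(\eps/8)L_r$; adding the $O(\eps L_r)$ slack reserved for the auxiliary small-cost bucket, and tuning constants, yields $\bc(\bx')\le (1+\eps)\bL$.

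The main obstacle I anticipate is getting the bucket count tight. Elements with very large $c_r(i)$ (which feasibility forces to have small $x_i$) cannot simply be lumped with the small-cost auxiliary bucket without breaking the cost-change bound, and in principle $c_r(i)/L_r$ can be arbitrarily large, producing arbitrarily many geometric buckets if handled naively. The resolution is to bucket those entries by $x_i c_r(i)\in[\eps L_r/k,L_r]$ rather than by $c_r(i)$ directly, which gives $O(\log(k/\eps)/\eps)$ buckets per coordinate, and then verify that within such a bucket the cost change argument still goes through (because $M_G$ is invariant and the extreme entries' contribution $x_i c_r(i)$ is controlled). Choosing the thresholds and geometric ratios carefully so that the product over $d$ coordinates matches exactly $(8\ln(2k)/\eps)^d$ is where the bulk of the bookkeeping lives.
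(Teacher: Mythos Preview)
Your core approach---bucket the fractional entries by (approximately) equal cost vectors and pipage-round within each bucket, using convexity of $F^{\bx}_{i,j}$ to keep $F$ nondecreasing and invariance of $x_i+x_j$ to control cost---is exactly what the paper does. The difference, and the gap, is in \emph{what quantity you bucket on}. You threshold and bucket on the contribution $x_i c_r(i)$; the paper thresholds and buckets on the raw cost $c_r(i)$. This distinction is not cosmetic: $x_i$ changes during pipage rounding, so neither your auxiliary-bucket bound nor your large-cost fix survives the rounding.

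Concretely, suppose $i,j$ are in the same bucket because $x_i c_r(i)\approx x_j c_r(j)$, but $c_r(i)\ll c_r(j)$. A single pipage step that shifts mass toward $j$ can make $x_j c_r(j)$ enormous, since only $x_i+x_j$ is preserved, not $x_i c_r(i)+x_j c_r(j)$. The same failure hits your auxiliary bucket: an entry with initially tiny $x_i c_r(i)$ but large $c_r(i)$ can have $x_i$ pushed toward $1$, and the ``$O(\eps L_r)$ regardless of rounding'' claim is false. Your ``$M_G$ invariant'' argument only controls cost when $c_r(\cdot)$ is nearly constant on $G$, which bucketing by $x_i c_r(i)$ does not ensure.

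The paper's fix is simpler: threshold on $c_r(i)\le \eps L_r/(2k)$ (so each such entry contributes at most $\eps L_r/(2k)$ no matter how $x_i$ moves, and there are at most $k$ of them), and bucket the remaining entries geometrically by $c_r(i)$. The upper end of the geometric range is handled by the standing assumption, used throughout the section, that all elements are \emph{small} (so $c_r(i)\le \eps^3 L_r$); this is how the bucket count of $8\ln(2k)/\eps$ per coordinate is obtained. Your worry about arbitrarily large $c_r(i)$ is thus resolved by context rather than by a new bucketing scheme.
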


\begin{proof}
Let $U' = \{ i~ |~ 0 < x_i < 1\}$ be the set of all fractional entries.
We define a new cost function $\bc'$ over the elements in $U$.

\[
\begin{array}{lcr}
c'_r(i) &=& \left\{
\begin{array}{lcl}
c_r(i) &~~~~ &i\notin U' \\
0 & & \displaystyle{ c_r(i) \leq \frac{\eps \cdot L_r}{2k}}\\
\displaystyle{\frac{\eps \cdot L_r}{2k} (1+\eps/2)^j} & &
\displaystyle{\frac{\eps \cdot L_r}{2k} (1+\eps/2)^j \leq c_r(i)<\frac{\eps \cdot L_r}{2k} (1+\eps/2)^{j+1}}
\end{array}
\right.
\end{array}
\]

Note that for any $i\in U'$, $\bc'(i)\leq \bc(i)$, and
$$c_r(i)\leq  \displaystyle{(1+\frac{\eps}{2}) c'_r(i) + \frac{\eps \cdot L_r}{2k}},$$
for all $1 \leq r \leq d$.
The number of different values $c'_r(i)$ can get for $i\in U'$
is bounded by $\frac{8 \ln (2k) }{\eps}$ (since all elements are
small, and $\ln(1+x) \geq x/2$). Hence the number of different
values $\bc'(i)$ can get for $i\in U'$ is bounded by
$k'= \left( \frac{8 \ln (2k) }{\eps} \right) ^d$.

We start with $\bx'=\bx$, and while there are $i,j \in U'$ such that $x'_i$ and $x'_j$ are both
fractional and $\bc'(i)=\bc'(j)$, define $\delta^{+} = \delta_{\bx',i,j}^{+}$
and $\delta^{-}=\delta_{\bx',i,j}^{-}$.
Since $i$ and $j$ have the same cost (by $\bc'$), it holds that
$\bc' \left( \bx_{i,j}(\delta^{+}) \right) =
\bc' \left( \bx_{i,j}(\delta^{-}) \right) =\bc'(\bx)$.
If $F^{\bx}_{i,j}(\delta^{+})\geq F(\bx)$,
then set $\bx''= \bx_{i,j}(\delta^{+})$, otherwise $\bx''= \bx_{i,j}(\delta^{-})$.
In both cases $F(\bx'')\geq F(\bx')$ and $\bc'(\bx'')= \bc'(\bx')$. Now, repeat
this step with $\bx'=\bx''$.
Since in each iteration the number of fractional entries in $\bx'$ decreases,
 the process will terminate (after at most $k$ iterations)
with a vector $\bx'$ such that $F(\bx')\geq F(\bx)$,
$\bc'(\bx')=\bc'(\bx)\leq \bL$,
and there are no two elements $i,j\in U'$ with $\bc'(i)=\bc'(j)$, where
$x'_i$ and $x'_j$ are both fractional. Also, for any $i\notin U'$, the entry
$x'_i$ is integral (since $x_i$ was integral and the entry was not modified by
the process). Thus, the number of fractional entries in $\bx'$ is
at most $k'$.
Now, for any dimension $1 \leq r \leq d$,
\begin{eqnarray*}
c_r(\bx')&=& \sum_{i\notin U'} x'_i c_r(i) + \sum_{i\in U'} x'_i c_r(i) \\
&\leq&
(1+ \eps/2
) \cdot \sum_{i\notin U'} x'_i  \cdot c'_r(i) +
   \sum_{i\in U'} x'_i \left( (1+\eps/2) c'_r(i) + \frac{\eps \cdot L_r}{2k} \right)\\
&=&
(1+\eps /2) \cdot \sum_{i\in U} x'_i  \cdot c'_r(i) +
   \sum_{i\in U'} x_i \frac{\eps \cdot L_r}{2k} \leq (1+\eps) L_r.
\end{eqnarray*}
This completes the proof.
\end{proof}

Using the above lemma, we can reduce the number of fractional
entries in $\bx$ to a number that is poly-logarithmic in $k$.
However, the number of values $D\sim \bx$
 remains super-polynomial. To reduce further the number
of fractional entries, we apply the above step twice,
that is, we convert $\bx$ with at most $|U|$ fractional entries to $\bx'$
with at most
$k'=\left(8 \ln (2 |U| )/\eps \right) ^d$. We
can then
apply the conversion again, to obtain $\bx''$ with at most
$k''= O(\log |U|)$ fractional entries.

\begin{lemma}
\label{lemma:multiple_pipage}
Given a vector $\bL$ and a constant $\eps > 0$,
let $\bx \in [0,1]^U$ be a vector satisfying
$\bc(\bx) \leq \bL$.
Then $\bx$ can be converted in time polynomial in $|U|$ to a vector
$\bx'$ with at most $k''=O(\log |U|)$
  fractional entries, such that $\bc(\bx')\leq (1+\eps)^2 \bL$,
and $F(\bx') \geq F(\bx)$,
\end{lemma}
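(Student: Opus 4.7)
The plan is to apply Lemma \ref{lemma:derand1} twice in succession: the first application reduces the number of fractional entries from $|U|$ down to something poly-logarithmic in $|U|$, and the second application reduces it further to $O(\log |U|)$, as already hinted in the paragraph preceding the statement.

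More precisely, first I would invoke Lemma \ref{lemma:derand1} with input vector $\bx$ (which has at most $k = |U|$ fractional entries), budget $\bL$, and the constant $\eps$. This produces a vector $\bx^{\star}$ with $\bc(\bx^{\star}) \leq (1+\eps)\bL$, $F(\bx^{\star}) \geq F(\bx)$, and at most
\[
k_1 = \left(\frac{8\ln(2|U|)}{\eps}\right)^{d}
\]
fractional entries, in time polynomial in $|U|$. Next I would invoke Lemma \ref{lemma:derand1} a second time, now applied to $\bx^{\star}$ (which has at most $k = k_1$ fractional entries), with budget $(1+\eps)\bL$ in place of $\bL$ and the same constant $\eps$. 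The output is a vector $\bx'$ with $\bc(\bx') \leq (1+\eps)^{2}\bL$, $F(\bx') \geq F(\bx^{\star}) \geq F(\bx)$, and at most
\[
k_2 = \left(\frac{8\ln(2k_1)}{\eps}\right)^{d}
\]
fractional entries. This second call is polynomial in $k_1$ and therefore polynomial in $|U|$.

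It remains to verify that $k_2 = O(\log |U|)$. Plugging in $k_1$, we have $\ln(2k_1) = \ln 2 + d \ln(8\ln(2|U|)/\eps) = O(d \log\log |U|)$ where the constant depends on $d$ and $\eps$. Hence $k_2 = O\bigl((d\log\log |U|/\eps)^{d}\bigr)$, and since $d$ and $\eps$ are fixed, this is $O((\log\log |U|)^{d}) = O(\log |U|)$, using the elementary asymptotic $(\log\log n)^{d} = o(\log n)$ for any fixed $d$.

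There is essentially no hard step here: once Lemma \ref{lemma:derand1} is in hand, the only ingredient is the observation that one iteration of the reduction kills off almost all fractional entries in bulk, and a second iteration applied to the already-reduced vector drops the count from poly-logarithmic to logarithmic. The mild subtlety worth flagging explicitly in the writeup is that on the second invocation we must use $(1+\eps)\bL$ as the budget bound, which is why the final guarantee in the statement is $(1+\eps)^{2}\bL$ rather than $(1+\eps)\bL$; the monotonicity $F(\bx') \geq F(\bx^{\star}) \geq F(\bx)$ then composes trivially.
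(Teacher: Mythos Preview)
Your proposal is correct and follows exactly the approach the paper outlines in the paragraph preceding the lemma: apply Lemma~\ref{lemma:derand1} once to drop from $|U|$ to $k_1=(8\ln(2|U|)/\eps)^d$ fractional entries, then apply it again (with budget $(1+\eps)\bL$) to drop to $k_2=(8\ln(2k_1)/\eps)^d=O((\log\log|U|)^d)=O(\log|U|)$. You have also correctly identified the one point the paper leaves implicit, namely that the second invocation uses the inflated budget $(1+\eps)\bL$, which accounts for the $(1+\eps)^2$ factor in the conclusion.
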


The next result follows immediately from
Lemma~\ref{lemma:expected_random} ($\mO$ is the value of an optimal solution for ${\SUB}$).

\begin{lemma}
\label{lemma:expected_enumeration}
Given $\bx \in [0,1]^U$ such that $\bx$ is a feasible fractional solution
with $F(\bx) \geq \mO /5$,
there exists
a realization of the random variable $D \sim \bx$, such that the solution
$\cD$ is nearly feasible, and $F(\cD) \geq (1-O(\eps) ) F(\bx)$.
\end{lemma}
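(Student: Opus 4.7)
The plan is to invoke Lemma~\ref{lemma:expected_random} and then extract a deterministic realization via the standard expectation argument (the probabilistic method). Specifically, given the feasible fractional solution $\bx$ with $F(\bx) \geq \mO/5$, I would define the random sets $D \sim \bx$ and
\[
D' = \begin{cases} D & \text{if $D$ is $\eps$-nearly feasible,} \\ \emptyset & \text{otherwise.} \end{cases}
\]
By Lemma~\ref{lemma:expected_random}, $D'$ is \emph{always} $\eps$-nearly feasible, and
\[
E[f(D')] \geq (1-O(\eps))\, F(\bx).
\]

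The key step is now purely existential. Since $E[f(D')] \geq (1-O(\eps))\,F(\bx)$, there must be at least one realization $\cD$ of the random variable $D'$ with $f(\cD) \geq E[f(D')] \geq (1-O(\eps))\, F(\bx)$. Because every realization of $D'$ is $\eps$-nearly feasible, this particular $\cD$ is nearly feasible as well. Identifying $\cD$ with its characteristic vector in $\{0,1\}^U \subseteq [0,1]^U$, we have $F(\cD)=f(\cD)$, which delivers the claimed bound.

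There is no real obstacle here: the lemma is essentially a derandomization-preparation restatement of Lemma~\ref{lemma:expected_random} via the probabilistic method, which is exactly why it is placed right after that lemma and right before Section~\ref{mcmb:deter} where we enumerate over (polynomially many) realizations after the pipage-rounding reduction of Lemma~\ref{lemma:multiple_pipage} has brought the number of fractional entries down to $O(\log|U|)$. The only minor subtlety worth flagging in the write-up is the notational identification of the set $\cD$ with its $\{0,1\}$-indicator vector so that the statement $F(\cD)\geq (1-O(\eps))F(\bx)$ is literally meaningful, and the observation that $\cD \subseteq U$ is obtained without any further sampling once $\bx$ is fixed.
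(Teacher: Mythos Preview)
Your proposal is correct and matches the paper's approach: the paper simply states that the lemma ``follows immediately from Lemma~\ref{lemma:expected_random},'' and you have spelled out precisely the probabilistic-method extraction that makes this immediate. The identification $F(\cD)=f(\cD)$ for the integral indicator vector is exactly the right way to reconcile the $F$ in the statement with the $f$ coming out of Lemma~\ref{lemma:expected_random}.
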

}

Consider the following rounding algorithm.\\

\noindent
{\bf Deterministic Rounding Algorithm for $\SUB$ with No Big Elements} \\
\noindent
{\bf Input:} A $\SUB$ instance, a feasible solution  $\bar{x}$ for the continuous problem, with
$F(\bar{x}) \geq \mO /5$.
\begin{enumerate}
\item
Define $\bx'= (1+\eps)^{-2} \cdot \bar{x}$
(note that $F(\bx') \geq (1+\eps)^{-2} \cdot F(\bx)$).
\item Convert $\bx'$ to $\bx''$
such that $\bx''$ is fractionally feasible, the number
of fractional entries in $\bx''$ is $O(\log |U|)$, and
$F(\bx)\geq (1+\eps)^{-2} F(\bx'')\geq (1-e^{-1}-O(\eps)) \mO$,
as in Lemma \ref{lemma:multiple_pipage}.
\item Enumerate over all possible realizations of $D\sim \bx''$.
For each such realization,
if the solution $\cD$ is $\eps$-nearly feasible
convert it to a feasible solution $\cD'$ (see Lemma \ref{lemma:nearly_fix}).
Return the solution with maximum value among the feasible solutions found.
\end{enumerate}

By Theorem~\ref{thm:prb_claim}, the algorithm returns a feasible solution of value at least $(1-O(\eps)) F(\bar{x})$.
Also, the running time of the algorithm is polynomial when $\eps$ is a fixed
constant. Replacing the
randomized rounding step in the
algorithm of Section~\ref{sec:mcmb} with the above Deterministic Rounding Algorithm,
we get the following result.

\begin{theorem}
\label{thm:continuous_det_eq} Let $f$ be a submodular function,
and assume we have an oracle for $F$. If there is a deterministic
polynomial time $\alpha$-approximation algorithm for the
continuous problem with respect to $f$, then there is a polynomial
time deterministic $(\alpha-\eps)$-approximation algorithm for
$\SUB$ with respect to $f$, for any $\eps>0$.
\end{theorem}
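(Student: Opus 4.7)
The plan is to follow the same two-phase structure as in Theorem \ref{thm:continuous_eq}, but replace the randomized rounding subroutine with the Deterministic Rounding Algorithm described just above. First I would run the profit enumeration loop from Section \ref{sec:mcmb}: for every $T \subseteq U$ of size at most $h = \lceil d \eps^{-4} \rceil$, form the residual $\SUB$ instance with respect to $T$, apply the hypothesized deterministic $\alpha$-approximation algorithm to the continuous residual problem to obtain $\bx$, and then invoke the Deterministic Rounding Algorithm (which uses the oracle for $F$) to convert $\bx$ into a feasible integral solution. Return the best such solution across all iterations.

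Correctness follows by essentially the same argument as in the randomized case, with only the rounding step changed. For $T = \bK_h$, where $\bK_h$ is the set of the $h$ top residual-profit elements of an optimal solution $\mO$, the residual optimum $\mO'$ satisfies $f(\mO') \ge (1-\eps) f(\mO)$, so $F(\bx) \ge \alpha(1-\eps) f(\mO)$; in particular $F(\bx) \ge \mO/5$ for $\eps$ small enough, which is exactly the precondition required by the Deterministic Rounding Algorithm. The excerpt already establishes, via Theorem \ref{thm:prb_claim} together with Lemmas \ref{lemma:multiple_pipage} and \ref{lemma:nearly_fix}, that this algorithm outputs a feasible set of value at least $(1-O(\eps))F(\bx)$; hence the iteration with $T = \bK_h$ produces a solution of value at least $(\alpha - O(\eps)) f(\mO)$, and rescaling $\eps$ gives the claimed ratio. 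For the running time, the outer loop contributes $|U|^{O(h)}$ iterations, which is polynomial for fixed $\eps$ and $d$; the continuous approximation is polynomial by assumption; the pipage-based reduction of Lemma \ref{lemma:multiple_pipage} leaves only $k'' = O(\log |U|)$ fractional entries in $\bx''$, so the exhaustive enumeration over the $2^{k''}$ realizations of $D \sim \bx''$ in step 3 of the rounding is polynomial in $|U|$; and the fixing step of Lemma \ref{lemma:nearly_fix} is trivially polynomial.

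The main obstacle, and the place where the oracle hypothesis is indispensable, is the pipage rounding step underlying Lemma \ref{lemma:multiple_pipage}: each swap must compare values of $F$ at the two perturbed vectors $\bx_{i,j}(\delta^+)$ and $\bx_{i,j}(\delta^-)$ in order to pick the endpoint that does not decrease the objective, and in the general submodular setting there is no closed-form expression for $F$. A deterministic oracle for $F$ lets us perform each such comparison exactly and in polynomial time, so the entire procedure becomes deterministic; without such an oracle one would have to estimate $F$ by sampling $f$, as in \cite{Vo08}, which would reintroduce randomness and defeat the derandomization.
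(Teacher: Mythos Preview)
Your proposal is correct and follows exactly the paper's approach: replace the randomized rounding step in the algorithm of Section~\ref{sec:mcmb} by the Deterministic Rounding Algorithm, and reuse the analysis of Lemma~2.4 verbatim for the enumeration over $T$. Your added remarks on the running time (in particular that $2^{k''}$ is polynomial since $k''=O(\log|U|)$) and on why the oracle for $F$ is needed precisely in the pipage comparison step are accurate and in fact more explicit than what the paper spells out.
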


We note that,
given an oracle to $F$, both the algorithms of \cite{Vo08} and \cite{LMNS09} for the
continuous problem are deterministic, thus we get
the following.

\begin{theorem}
\label{thm:det_alg}
Given an oracle for $F$,
there is a polynomial time deterministic  $(1-e^{-1} - \eps)$-approximation
algorithm for $\SUB$ with a monotone function, for any $\eps>0$.
\end{theorem}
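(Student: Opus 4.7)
The plan is to derive this theorem as an immediate corollary of Theorem~\ref{thm:continuous_det_eq} by plugging in a deterministic approximation algorithm for the continuous relaxation in the monotone case. Concretely, I would invoke the continuous greedy algorithm of Vondr\'{a}k \cite{Vo08}, which achieves a $(1-e^{-1}-o(1))$-approximation for continuous $\SUB$ when $f$ is monotone submodular. Once this is available as a deterministic oracle-based procedure, Theorem~\ref{thm:continuous_det_eq} (applied with $\alpha = 1-e^{-1}-o(1)$ and any sufficiently small $\eps' < \eps$) directly produces a deterministic $(1-e^{-1}-\eps)$-approximation for the discrete problem.

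The one point that needs justification is that Vondr\'{a}k's continuous greedy is in fact deterministic given an oracle for $F$. As originally presented, the algorithm maintains a fractional point $\bar y(t)$ on a continuous trajectory and, at each time step, moves in the direction that approximately maximizes the linearization of $F$ at $\bar y(t)$ subject to the polytope constraints. In the generic submodular setting, the gradient of $F$ is estimated by random sampling of $f$, which is where randomization enters. When we are handed an oracle that evaluates $F$ exactly (as in the hypothesis of the theorem), each partial derivative $\partial F/\partial y_i$ can be computed as $F(\bar y \vee \bar i) - F(\bar y \wedge (\mathbf{1}-\bar i))$ (or analogous exact finite-difference expressions), so the continuous greedy proceeds without any sampling and is therefore deterministic in polynomial time. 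The same observation applies to the algorithm of \cite{LMNS09} in the non-monotone case, as already noted in the paragraph preceding the theorem.

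With a deterministic polynomial-time $(1-e^{-1}-o(1))$-approximation algorithm $\cA$ for continuous monotone $\SUB$ in hand, I would then run the framework underlying Theorem~\ref{thm:continuous_det_eq}: namely, enumerate over all subsets $T\subseteq U$ with $|T|\le h=\lceil d\eps^{-4}\rceil$ of candidate ``big'' elements, call $\cA$ on the residual continuous instance to obtain $\bar x$, and apply the Deterministic Rounding Algorithm (which uses Lemma~\ref{lemma:multiple_pipage} to reduce $\bar x$ to $O(\log|U|)$ fractional entries, enumerates over all $2^{O(\log|U|)}=\mathrm{poly}(|U|)$ realizations of $D\sim \bar x''$, and fixes any nearly feasible realization via Lemma~\ref{lemma:nearly_fix}). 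Each of these steps is deterministic and polynomial in $|U|$ for fixed $\eps$, so the overall algorithm is deterministic and polynomial, and its output achieves value at least $(1-e^{-1}-\eps)\mO$.

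I do not foresee a real obstacle here: the entire argument is a direct composition of previously established components. The only minor subtlety is ensuring that the ``oracle for $F$'' assumption is used consistently, both to derandomize the continuous subroutine from \cite{Vo08} and to drive the pipage-rounding and enumeration steps of the Deterministic Rounding Algorithm; both uses are already accounted for in the hypothesis of Theorem~\ref{thm:continuous_det_eq}, so the conclusion follows.
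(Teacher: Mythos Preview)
Your proposal is correct and matches the paper's own argument essentially verbatim: the paper simply observes that, given an oracle for $F$, Vondr\'{a}k's continuous greedy \cite{Vo08} is deterministic, and then invokes Theorem~\ref{thm:continuous_det_eq}. Your write-up is in fact more detailed than the paper's one-line justification, but the approach is identical.
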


\begin{theorem}
\label{thm:det_alg_non_monotone}
Given an oracle for $F$,
there is a polynomial time deterministic  $(1/4- \eps)$-approximation
algorithm for $\SUB$ for any $\eps>0$.
\end{theorem}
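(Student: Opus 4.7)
The plan is to derive this statement as an immediate corollary of Theorem~\ref{thm:continuous_det_eq} combined with a known deterministic approximation for the continuous relaxation. Concretely, Theorem~\ref{thm:continuous_det_eq} reduces the task of producing a deterministic $(1/4-\eps)$-approximation for $\SUB$ to the task of producing a deterministic $(1/4-o(1))$-approximation for continuous $\SUB$, provided one has access to an oracle for $F$.

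First I would invoke the algorithm of Lee et al.~\cite{LMNS09} for the continuous relaxation of $\SUB$ with a general (not necessarily monotone) submodular objective, which achieves ratio $1/4-o(1)$. The only randomization in the continuous algorithms of \cite{Vo08,LMNS09} comes from estimating the multilinear extension $F$ via sampling of $f$; when an exact oracle for $F$ is supplied (as in our hypothesis), every such sampling step is replaced by a single deterministic query, and the resulting continuous algorithm is deterministic and still runs in polynomial time. This is precisely the observation already recorded in the paper just before Theorem~\ref{thm:det_alg}.

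Next I would plug this deterministic $(1/4-o(1))$-approximation for continuous $\SUB$ into Theorem~\ref{thm:continuous_det_eq} in the role of the hypothesized deterministic algorithm $\cA$. The theorem then directly yields a polynomial-time deterministic $(1/4-\eps)$-approximation for $\SUB$, for any $\eps>0$, which is the statement to be proved. (Strictly speaking, one absorbs the $o(1)$ loss of the continuous algorithm into the $\eps$ slack of Theorem~\ref{thm:continuous_det_eq}, adjusting the internal $\eps$ parameter by a constant factor.)

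There is no real obstacle here: all the substantive work, namely the Deterministic Rounding Algorithm of Section~\ref{mcmb:deter}, the pipage-style reduction to $O(\log|U|)$ fractional entries (Lemma~\ref{lemma:multiple_pipage}), the profit-enumeration wrapper of Section~\ref{sec:mcmb}, and the probabilistic analysis of Theorem~\ref{thm:prb_claim}, has already been done and is encapsulated in Theorem~\ref{thm:continuous_det_eq}. The only thing to check is that the continuous $(1/4-o(1))$-approximation is available in deterministic form under the oracle assumption, which is immediate. Thus the proof reduces to a one-line application of Theorem~\ref{thm:continuous_det_eq} to the algorithm of \cite{LMNS09}.
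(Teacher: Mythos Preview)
Your proposal is correct and matches the paper's own argument: the paper simply notes that, given an oracle for $F$, the continuous algorithm of \cite{LMNS09} becomes deterministic with ratio $1/4-o(1)$, and then applies Theorem~\ref{thm:continuous_det_eq}. There is nothing more to the proof than the one-line corollary you describe.
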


For the problem of maximum coverage with $d$ knapsack
constraints, i.e.,  $\SUB$ where the objective function is
$f=f_{G,\bar{p}}$, for a given bipartite graph $G$ and profits
$\bar{p}$, the function $F$ can be evaluated deterministically
(see~\cite{as04}). This yields the following result.

\begin{theorem}
There is a polynomial time deterministic
$(1-e^{-1}-\eps)$-approximation algorithm for maximum coverage with
$d$ knapsack constraints.
\end{theorem}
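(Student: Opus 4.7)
The plan is to apply Theorem~\ref{thm:det_alg} as a black box. That theorem already provides a polynomial-time deterministic $(1-e^{-1}-\eps)$-approximation for $\SUB$ with a monotone objective, conditional on access to an oracle for the extension by expectation $F$. So the only thing left to verify is (i) that $f_{G,\bar{p}}$ is a monotone submodular set function and (ii) that its extension by expectation can be evaluated deterministically in polynomial time. Item (i) is immediate: for $S \subseteq T \subseteq L$, $N(S) \subseteq N(T)$ and all profits $p_v$ are non-negative, so $f_{G,\bar{p}}(S) \leq f_{G,\bar{p}}(T)$; submodularity follows from the standard coverage argument (marginal gain of adding an element to a larger set can only be smaller).

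The core of the proof is item (ii). I would use linearity of expectation to write, for any $\bar{y} \in [0,1]^L$,
\[
F(\bar{y}) \;=\; E_{S \sim \bar{y}}\!\left[\,\sum_{v \in N(S)} p_v\right] \;=\; \sum_{v \in R} p_v \cdot \Prb{v \in N(S)} \;=\; \sum_{v \in R} p_v \left(1 - \prod_{u \in N(v)} (1 - y_u)\right),
\]
where the last equality uses the independence of the inclusion events $\{u \in S\}_{u \in L}$ under the product distribution $S \sim \bar{y}$. This closed form can be evaluated exactly in time $O(|L| + |R| + |E|)$ per query, which is the observation already made by Ageev and Sviridenko~\cite{as04}. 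Plugging this exact evaluator of $F$ into the algorithm guaranteed by Theorem~\ref{thm:det_alg} yields a deterministic $(1-e^{-1}-\eps)$-approximation for maximum coverage with $d$ knapsack constraints.

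There is essentially no ``hard part'' here: the entire machinery, including the pipage-style reduction to $O(\log|U|)$ fractional entries (Lemma~\ref{lemma:multiple_pipage}) and the deterministic rounding by enumeration, is inherited from the previous section. The only substantive observation is the closed-form expression for $F$, which removes the need for the sampling-based approximation of $F$ that would otherwise force the algorithm to be randomized. Thus the statement follows immediately by specializing Theorem~\ref{thm:det_alg} to $f = f_{G,\bar{p}}$ and using the exact oracle above in place of a sampled estimate.
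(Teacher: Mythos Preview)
Your proposal is correct and follows exactly the same approach as the paper: invoke Theorem~\ref{thm:det_alg} for monotone $\SUB$, and supply the deterministic oracle for $F$ via the closed-form product expression for coverage (attributed, as in the paper, to~\cite{as04}). You even spell out the formula $F(\bar{y})=\sum_{v\in R} p_v\bigl(1-\prod_{u\in N(v)}(1-y_u)\bigr)$, which is more detail than the paper provides, but the argument is identical.
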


\comment{
\section{Maximum Coverage with Multiple Packing and Cost
Constraints
}
\label{sec:budgeted_max_coverage}
In this section we consider the problem of {\em maximum coverage with multiple
packing and cost constraints ($\MC$)}.
Let $\f$
denote the maximal number of sets to which an element belongs, and let
\begin{equation}
\label{eq:def_alphaf} \alphaf = 1- \left( 1- \frac{1}{\f} \right ) ^\f.
\end{equation}
We give below an $(\alphaf - \eps)$-approximation algorithm for the problem.
 We note that, for any $\f \geq 1$, $\alphaf > 1 -e^{-1}$. In
solving $\MC$, our algorithm uses the following continuous version of the problem. Let $\by \in  [0,1]^{ \bS \times A}$ and $\bx \in [0,1]^\bS$.
For short, we write $y_{i,j}=y_{S_i,a_j}$ and $x_i=x_{S_i}$. Given an input for $\MC$, we say that $(\by,\bx)$
\comment{
Recall that the problem of {\em maximum coverage with multiple packing and cost
constraints ($\MC$)} is
the following generalization of the maximum coverage problem.
Given is a collection of sets $\bS = \{ S_1, ..., S_m \}$ over a ground set
$A=\{ a_1,...,a_n \}$. Each element $a_j$ has a profit $p_j\geq 0$
and a $d_1$-dimensional size vector $\bar{s}_j =
(s_{j,1}, \ldots , s_{j,d_1})$, such that $s_{j,r} \geq 0$ for all
$1 \leq r \leq d_1$. Each set $S_i$ has $d_2$-dimensional
weight vector $\bw_i = (w_{i,1},\ldots ,w_{i,d_2})$.
Also given is a $d_1$-dimensional capacity vector $\bB =(B_1, \ldots , B_{d_1})$,
and a $d_2$-dimensional weight bound vector $\bW= (W_1,\ldots,W_{d_2})$.
A solution for the problem is a collection of subsets $H$
and a subset of elements $\cE$, such that for any $a_j \in \cE$
there is $S_i \in H$ such that $a_j \in S_i$.
A solution is feasible if the total weight of subsets in $H$ is
bounded by $\bW$ and the total size of elements in $\cE$ is
bounded by $\bB$. The profit of a solution $(H,\cE)$ is
the total profit of elements in $\cE$. The objective
of the problem is to find a feasible solution with
maximal profit.

Denote the maximal number of sets a single element belongs
to by $f$, and let $\alpha_f = 1- \left( 1- \frac{1}{f} \right ) ^f $. Our
objective is to obtain a $(\alpha_f -\eps)$-approximation algorithm for the
problem. Note that for any $f \geq 1$ it holds that $\alpha_f > 1-e^{-1}$.

For solving the problem, our algorithm uses a slightly different point of view.
Given an input for the problem, we say that a pair $(\by,\bx)$ where $\by \in  [0,1]^{ \bS \times A}$ and $\bx \in  [0,1]^\bS$ } is a solution
if, for any $S_i \in \bS$ and $a_j \notin S_i$ it holds that $y_{i,j}=0$ (for short, we write $y_{i,j}=y_{S_i,a_j}$), and for any $S_i \in \bS$
and $a_j \in A$ it holds that $y_{i,j} \leq x_i$. Intuitively, $x_i$ is an indicator for the selection of the set $S_i$ into the solution, and
$y_{i,j}$ is an indicator for the selection of the element $a_j$ by the set $S_i$ into the solution. We say that such a solution is feasible if,
for any  $1\leq r \leq d_1$ it holds that $\sum_{a_j \in A} s_{j,r} \cdot \sum_{S_i \in \bS} y_{i,j}  \leq B_r$ (the total size of elements does
not exceed the capacity), and for any $1\leq r \leq d_2$ it holds that $\sum_{S_i \in \bS} x_i \cdot w_{i,r} \leq W_r$ (the total weight of
subsets does not exceed the weight bound).
The value (or profit)  of the solution  is defined by $p(\by,\bx)= p(\by) = \sum_{a_j \in A}  \min
\{1, \sum_{S_i \in \bS} y_{i,j} \} \cdot p_j$. By the above definition, a solution consists of fractional values. We say that a solution
$(\bx,\by)$ is \emph{semi-fractional}
if $\bx \in \{0,1\}^{\bS}$
(that is, sets cannot be fractionally selected, but elements can be). Also,
we say that a solution is \emph{integral} if both $\bx \in \{0,1\}^{\bS}$
and $\by \in \{0,1\}^{ \bS \times A}$.

Two computational problems arise from the above definitions. The first
is to find a semi-fractional solution of maximal profit. We refer to this
problem as the {\em semi-fractional problem}. The second
is to find an integral solution of maximal profit, to which we refer as
the {\em integral problem}. It is easy to see that the integral problem is
equivalent to $\MC$, therefore our objective is to find an optimal solution
for the integral problem.

\paragraph{Overview:} To obtain an approximation algorithm for the integral problem,
we first show how it relates to the semi-fractional problem. In particular, we show
that, given an $(\alphaf-O(\eps))$-approximation algorithm for the
semi-fractional problem, we can derive approximation algorithm with
the same approximation ratio for the integral problem.
Next, we interpret the semi-fractional problem as a $\SUB$ instance
whose universe has infinite size.
 We then use the framework developed in Section \ref{sec:randomized_mcmb}
to solve this problem. As direct enumeration over the most
profitable elements in an optimal solution is impossible here, we
guess which sets are the most profitable in an optimal solution.
We use this guessing to obtain a fractional solution (with
polynomial number of non-zero entries), such that the conditions of
Theorem \ref{thm:prb_claim} are satisfied. Together with a fixing procedure,
applied to the resulting nearly feasible solution, this leads to our
approximation algorithm. The process can be derandomized by using the
same tools as in Section \ref{mcmb:deter}.

\subsection{Reduction to the Semi-fractional Problem}

We first show that any semi-fractional solution for the problem can
be converted to a solution with at least the
same profit and at most $d_1$ fractional
entries. Next, we show how this property
enables to enumerate over the most profitable elements in an optimal solution.
Throughout this section we assume that, for some constant
$\alpha \in (0,1)$, we have
an $\alpha$-approximation algorithm for the semi-fractional problem.

\begin{lemma}
\label{lemma:frac_fix}
Let $(\byf,\bxf)$ be a feasible semi-fractional solution. Then
$\byf$ can be converted in polynomial time to another feasible semi-fractional
solution $(\by,\bxf)$ with at most $d_1$ fractional entries, such
that $p(\by)\geq p(\byf)$.
\end{lemma}

\begin{proof}
Let $(\byf,\bxf)$ be a semi-fractional feasible solution. W.l.o.g,
we assume that, for any $a_j \in A$, $\sum_{S_i \in \bS} \yf_{i,j} \leq 1$,
and if $\yf_{i,j} \neq 0$ then for any $S_{i'}\neq S_i$
it holds that $\yf_{i',j} =0$. Note that  any
solution can be easily converted to such a solution having
the same profit.
If there are more than $d_1$ fractional entries,
let $\bs_{j_1},\ldots \bs_{j_k}$ be the size
vectors of the corresponding elements, and
let $S_{i_1} \ldots S_{i_k}$ be the corresponding sets.
Since $k > d_1$,
there must be a linear dependency between the size
vectors. W.l.o.g we can write
$\lambda_1 \bs_{j_1} + \ldots \lambda_{p} \bs_{j_p} = 0$ for $p=d_1 +1$.
We can define $\byf(\eps)$ by
$\yf_{i_\ell,j_\ell}(\eps) =  \yf_{i_\ell,j_\ell} +\eps \lambda_\ell$
for $1\leq \ell \leq p$, and $\yf_{i,j}(\eps) =\yf_{i,j}$ for any other
entry. As long
as $\byf(\eps) \in [0,1] ^{ \bS \times A}$, $\byf(\eps)$ is
a semi-fractional feasible solution. Let $\eps^{+}$ and $\eps^{-}$ be
the maximal and minimal values of $\eps$ for which
$\byf(\eps) \in [0,1] ^{ \bS \times A}$. The number of
fractional entries in $\byf(\eps^{+})$ and $\byf(\eps^{-})$ is
smaller than the number of fractional entries in $\byf$.
Also, $p(\eps)= p(\byf(\eps))$ is a linear function,
thus either $p(\byf(\eps^{+})) \geq p(\byf)$ or
$p(\byf(\eps^{-})) \geq p(\byf)$. Thus, we can
convert $\byf$ to a feasible solution ${\by}'$ that has less fractional
entries, such that $p({\by}') \geq p(\byf)$.
By repeating the above process
we can obtain a fractional
solution with at most $d_1$ fractional entries.
\end{proof}
We use Lemma \ref{lemma:frac_fix} to prove the next result.

\begin{lemma}
Given an $\alpha$-approximation algorithm for the semi-fractional
problem, an $\alpha$-approximation algorithm for the integral problem can be
derived in polynomial time.
\end{lemma}

\begin{proof}
Given a collection $T$ of pairs $(a_j,S_i)$ of an element
$a_j$ and a set $S_i$, such that $a_j \in S_i$,
denote the collection of sets in $T$ by $T_S$,
and the collection of elements in $T$ by $T_\cE$. We define a
residual instance for the problem as follows. The
elements are:
$$A_T=\{a_j \in A ~|~ \mbox{$a_j \notin  T_\cE$, for
any $a_{j'} \in T_\cE$, $p_j\leq p_{j'}$ } \},$$
where the size of $a_j \in A_T$ is $\bs_j$, and the profit
of $a_j$ is $p_j$.
The sets are $\bS_T=\{S'_1,\ldots,S'_m\}$, where
$S'_i=S_i \cap A_T$, and
\[
\begin{array}{lcr}
\bw'_i &=& \left\{
\begin{array}{lcl}
\bw_i &~~~~ &S_i \notin T_S \\
0 & & \mbox{otherwise}
\end{array}
\right.
\end{array}
\]

 The weight
bound of the residual instance is $\bW_T = \bW - w(T_S)$,
where $w(T_S) = \sum_{S_i \in T_S} \bw_i$, and
the capacity is $\bB_T = \bB - s(T_\cE)$, where
$s(T_\cE) = \sum_{a_j \in T_\cE} \bs_j$.

Clearly, a solution of profit $v$ for the
residual instance with respect to a collection $T$
gives a solution of profit $v +  p(T_\cE)$
for the original instance, where $p(T_\cE) = \sum_{a_j\in T_\cE} p_j$.
Let $\mO = (\bx,\by)$ be an optimal solution for the integral problem.
W.l.o.g. we assume that for all $a_j \in A$, $\sum_{S_i \in \bS} y_{i,j} \leq 1$
(that is, no element is selected by more than one set).
Let $R$ be the collection of $h= \frac{d_1}{1-\alpha}$
most profitable elements $a_j$  for which there exists $S_i$
such that $y_{i,j}=1$ (note that there is a unique set $S_i$ for
each $a_j$). Define $T^\mO= \{ (a_j, S_i) | a_j \in R \land y_{i,j}=1\}$.
It is easy to verify that the optimal integral solution for
the residual problem with respect to $T^\mO$ is
$\mO - p(T_\cE^\mO)$.

Now, assume that we have an $\alpha$-approximation
algorithm for the semi-fractional problem. Then the algorithm
returns a fractional solution $(\byf,\bxf)$ with
$p(\byf) \geq \alpha ( \mO - p(T_\cE^{\mO}))$.
By Lemma
\ref{lemma:frac_fix}, this solution can be converted
to a solution $(\bz,\bxf)$ with up to $d_1$ fractional
entries satisfying
$p(\bz)
\geq p(\byf)$. Now, consider rounding
down to zero the value of each fractional entry in
$\bz'$.
This results in a new feasible integral solution
$\bz''$
with $p(\bz') \geq p(\byf) - \frac{d_1 \cdot p(T_\cE^{\mO})}{|T|}$ (since
the profit of any element in the residual solution is bounded by
$\frac{p(T_\cE^{\mO})}{|T|}$).
Hence, we obtain a solution for the integral problem of value
at least
$$p(T_\cE^{\mO})+p(\byf) - \frac{d_1 \cdot p(T_\cE^{\mO})}{|T|}
\geq p(T_\cE^{\mO}) \left( 1- \frac{d_1}{|T|} \right) + \alpha  (\mO -
p(T_\cE^{\mO}))\geq \alpha \mO .$$
Thus, we have an $\alpha$-approximation for the optimum of the integral problem.
 To apply this
technique, we need to guess the correct set $T$, which
can be done in time $(n\cdot m)^{O(1)}$ for a constant $\alpha$.
\end{proof}

In Theorem \ref{thm:semi_frac} we show that there is a polynomial time
$(\alphaf -\eps)$-approximation algorithm for the semi-fractional problem,
where $\alphaf$ is defined in (\ref{eq:def_alphaf}).
Thus, we have the following.

\begin{theorem}
There is a polynomial time $(\alphaf- \eps)$-approximation algorithm
for $\MC$ for any $\eps > 0$, where  $\f$ is the maximal
number of sets to which a single element belongs.
\end{theorem}

\subsection{Solving the Semi-fractional Problem}
\subsubsection{A Submodular Point of View}

Let $\tmO=(\by,\bx)$
be an optimal solution for an instance of the semi-fractional problem. W.l.o.g.,
we may assume that for any element $a_j \in A$, $\sum_{S_i \in \bS} y_{i,j}
\leq 1$.
For any $S_i \in S$, we define the profit of $S_i$ with respect
to the solution $\tmO$ by $p_{\tmO} (S_i)= p(S_i)= \sum_{a_j \in A} y_{i,j} p_j$
(note that if $x_i=0$ then $p(S_i) = 0$).
Since  $\sum_{S_i \in \bS} y_{i,j} \leq 1$ holds for any $a_j \in A$, this
means that $p(\tmO) = \sum_{S_i \in \bS} p_{\tmO} (S_i)$.

Given $T$, the collection of the $h=\eps^{-4} \cdot d$
most profitable sets in $\tmO$, we define a maximization
problem for a submodular function, with $d= d_1 +d_2$
linear constraints. Let $\bW' = \bW- w(T)$
(we guess the set $T$).
\begin{itemize}
\item
For each $S_i \in T$ and any vector $\bz \in [0,1]^{A}$ satisfying:
$z_j = 0$ for all $a_j \notin S_i$,
add the element
$(S_i,\bz)$ to the universe $U$. The cost of this
element is $\bc$, where $c_r= \sum_{a_j \in A} z_j \cdot
s_{i,j}$ for $1\leq r \leq d_1$,
and $c_r=0$ for $d_1+1\leq r \leq d_1+d_2=d$.
\item
For each set $S_i \notin T$ and a vector $\bz \in [0,1]^{A}$ satisfying
\begin{enumerate}
\item[1.]
$z_j = 0$ for all $a_j \notin S_i$,
\item[2.]
for any $1\leq r \leq d_1$
it holds that $\sum_{a_j \in A} z_j \cdot s_{j,r} \leq \eps^3 B_r$, and
\item[3.]
for any $1\leq r \leq d_2$ it holds that $ w_{i,r} \leq \eps^3 W'_r$,
\end{enumerate}
add  $(S_i,\bz)$ to $U$.
The cost of this element is $\bc$, where
$c_r= \sum_{a_j \in A} z_j \cdot s_{j,r}$ for $1\leq r\leq d_1$,
and $c_{d_1+r} = w_{i,r}$ for $1\leq r \leq d_2$.
\end{itemize}

The budget vector $\bL$ for this instance is the vector $\bB$
concatenated to $\bW'$,
that is, $L_r = B_r$ for $1\leq r \leq d_1$, and $L_{d_1 +r } = W'_r$
for $1\leq r \leq d_2$.
Define $f_j: 2^U \rightarrow \mathbb{R}$ for any $a_j \in A$,
by
\begin{equation}
\label{eq:def_fj}
f_j(V)= \min \{ \sum_{(S_i,\bz) \in V} z_j ,1\},
\end{equation}
and  $f: 2^U \rightarrow \mathbb{R}$  by $f(V) =\sum_{a_j \in A} f_j(V) \cdot p_j$.
Since each $f_j$ is a submodular non-decreasing set function, $f$ is
a submodular non-decreasing set function as well.
It is easy to see  that a solution of value $v$ for the above instance of
 ${\SUB}$ yields a solution of the same value (profit) for the
semi-fractional problem.

Let the size of a set
$S_i \in S \setminus T$
(with respect to the solution $\tmO$)
be $s(S_i) = \sum_{a_j \in A} y_{i,j} \cdot \bs_j$. We say that
a set $S_i \in \bS \setminus T$ is \emph{small} if
$s(S_i)\leq \eps^3 \bB$ and $\bw_i \leq \eps^3 \bW'$; otherwise it is
{\em big}.
Consider the following solution for the above ${\SUB}$ instance.
For each $S_i \in T$, define the vector $\bz$ by $z_j=y_{i,j}$,
for all $a_j \in A$, and add
the element $(S_i,\bz)$ to the solution; for each
$S_i \notin T$ such that $S_i$ is {\em small},
define $\bz$ by $z_j=y_{i,j}$ and add $(S_i,\bz)$ to the solution.
Denote the resulting solution by $V$.
It can be easily verified that $V$ is a feasible solution,
and it holds that
\begin{eqnarray}
f(V)&=& \sum_{S_i \in T} p(S_i) +
\sum_{S_i \in \bS \setminus T \mbox{, $S_i$ is small}} p(S_i) =
\tmO -
\sum_{S_i \in \bS \setminus T \mbox{, $S_i$ is big}} p(S_i)
\nonumber \\
&\geq& \tmO - \frac{d\cdot \eps ^{-3}}{h} \tmO \geq (1-\eps) \tmO.
\label{eq:lb_fV}
\end{eqnarray}
The last inequality holds since the number of big sets in $\tmO$
is bounded by $d \cdot \eps^{-3}$, and the profit of each of these
 sets is bounded by $\tmO /h$. This implies that the value of the optimal
solution for the $\SUB$ instance
 is between $(1-\eps) \tmO$ and $\tmO$.

\subsubsection{Obtaining a Distribution on the Universe of the Submodular
Problem}
We now use the technique of Section \ref{sec:randomized_mcmb}
for solving the $\SUB$ instance. To do so, we first need to obtain a fractional
feasible solution. As the size of $U$ may be unbounded, we cannot
use the algorithm of Vondr\'{a}k \cite{Vo08}. Thus, we obtain a fractional
solution by using
a linear programming formulation of the problem. Let $\bS'$
be the collection of all sets $S_i$ in $\bS$ such that $S_i \in T$,
or
${\bw}_i \leq \eps^3 \bW'$.
Consider the following linear program:

\[
\begin{array}{llll}
(LP(T))& \mbox{maximize} &

 \displaystyle{\sum_{S_i \in \bS'}  \sum_{a_j \in A}  y_{i,j} \cdot  p_j } \\
& \mbox{subject to:}\\ \\
  & \forall S_i \in \bS', a_j\in A  & \displaystyle{0 \leq y_{i,j}\leq x_i }\\
& \forall  S_i \in \bS',a_j\notin
S_i
 & \displaystyle{y_{i,j}=0 }\\
& \forall a_j\in A & \displaystyle{ \sum_{S_i \in \bS'} y_{i,j} \leq 1}\\
&\forall 1\leq r\leq d_1 & \displaystyle{\sum_{S_i \in \bS'}  \sum_{a_j \in A}  y_{i,j} \cdot s_{j,r} \leq B_r}\\
& \forall 1\leq r\leq d_2&
  \displaystyle{ \sum_{S_i \in \bS' \setminus T} w_{i,r}\cdot x_i \leq W'_r} \\
& \forall S_i\in T & \displaystyle{x_i =1} \\
& \forall S_i \in \bS'\setminus T \mbox{ and } \forall 1\leq r\leq
d_1 &
   \displaystyle{\sum_{a_j \in A} y_{i,j}\cdot s_{j,r} \leq \eps^3 B_r \cdot x_i}\\
\end{array}
\]

Let $\bxf,\byf$ be an optimal solution for $LP(T)$ of value
$\mOf$. Clearly, $\mOf$ is greater or equal to the
optimal solution of the $\SUB$ instance; thus, by (\ref{eq:lb_fV}),
$\mOf \geq (1-\eps) \tmO$. We use this solution to
generate fractional solution for the $\SUB$ instance. Define $\bar{X}
\in [0,1]^{U}$ as follows.
For any $S_i\in \bS'$ such that $\xf_i > 0$, let $\bz_i \in [0,1]^{A}$,
where $z_{i,j}  =\frac{ \yf_{i,j}} {\xf_i}$,
and we set
$X_i= X_{(S_i,\bz_i)} = \xf_i$. For any other $u \in U$, set $X_u=0$.
For any $a_j \in A$,
define $Y_j = \sum_{S_i \in \bS'} \yf_{i,j}$, then clearly
$\mOf= \sum_{a_j \in A} {Y_j \cdot p_j}$.
\begin{lemma}
\label{lp_round1}
Let $D$ be a random variable such that
$D\sim \bar{X}$, then for any $a_j \in A$, $E[f_j(D)] \geq \alpha_f \cdot
Y_j$, where $f_j$ is defined in (\ref{eq:def_fj}).
\end{lemma}
We use in the proof the next claim.
\begin{claim}
\label{claim:lb_Yj_claim}
For any $x\in [0,1]$ and $\f \in \mathbb{N}$,
\begin{equation*}
 1- \left(1-\frac{x}{\f}\right)^{\f} \geq x \cdot \alphaf,
 \end{equation*}
where $\alphaf$ is defined in (\ref{eq:def_alphaf}).
\end{claim}
\begin{proof}
Let $h(x) = 1- \left(1-\frac{x}{\f}\right)^{\f} -  x \cdot \alphaf$, then $h(0)=h(1)=0$.
Also, $h''(x) = - \frac{\f-1}{\f} \left (1- \frac{x}{\f} \right )^{\f-2} \leq 0$ for $x\in [0,1]$.
Hence, $h(x)\geq 0 $ for $x\in [0,1]$, and the claim holds.
\end{proof}

\begin{dl_proof} {\ref{lp_round1}}
For the case where $Y_j = 0$ the claim trivially holds, thus
we assume below that $Y_j \neq 0$.
Let $X_i$ be an indicator random variable for $(S_i,\bz_i) \in D$, for
any $S_i \in \bS'$. The random variables $X_i$ are independent,
and $Pr[X_i=1] = \xf_i$.  Then,
$$
f_j(D)= \min\left\{ 1, \sum_{(S_i,\bz_i) \in D} z_{i,j} \right\}=
\min\left\{1, \sum_{S_i\in \bS', a_j \in S_i} \frac{
\yf_{i,h
}}{\xf_i} \cdot X_i\right\}$$

Let $S'[j]
=\left \{S_i \in \bS'~|~ a_j \in S_i \right\}$
 and $\tau_j = \abs{S'[j]}$.
Let $H \geq 1$ be an integer
such that, for any $S_i\in S'[j]$ with $\xf_i \neq 0$,
$\frac{\yf_{i,j}}{\xf_i}$ is an integral multiple of $\delta = 1/H$
 (assuming all values are rational, such a value of $H$ exists).
 Let $Z_1,\ldots, Z_H$ be a set of indicator random variables used as follows.
Whenever $X_i$ is selected in our random process (i.e., $X_i=1$),
randomly select $\frac{\yf_{i,j}}{\xf_i} \cdot \delta^{-1}$
 indicators among $Z_1, \ldots, Z_H$ with uniform distribution.
 For $1 \leq h \leq H$, $Z_h=1$ if $Z_h$ was selected by some $X_i$, $i \in
 S'[j]$
(we say that $Z_h$ is {\em selected} in this case), otherwise $Z_h=0$.
 In this process, the probability of a specific indicator $Z_h$ to be
 selected by a specific $X_i$ is zero when $x_i=0$, and
$\xf_i \cdot \frac{\yf_{i,j}}{\xf_i} = \yf_{i,j}$ otherwise.
Hence, we get that, for all $1 \leq h \leq H$,
\begin{eqnarray}
E[Z_h] = Pr(Z_h =1) &=& 1- \prod_{S_i\in S[j]} \left(1-\yf_{i,j}
 \right ) \nonumber \\
& \geq &
 1-\left( \frac{1}{\tau_j} \sum_{S_i\in S'[j]} \left (  1- \yf_{i,j}
 \right ) \right) ^ {\tau_j} \nonumber \\
& =& 1- \left( 1- \frac{ Y_j}{\tau_j} \right ) ^{\tau_j} \geq Y_j \alphaf.
\label{eq:mean_Zh}
\end{eqnarray}
The first inequality follows from the inequality of the three means, and the
second inequality follows from Claim \ref{claim:lb_Yj_claim}.

Let $Y'_j= \delta \cdot \sum_{h=1}^{H} Z_h$ be $\delta$ times the number of selected indicators.
An important property of $Y'_j$ is that $Y'_j \leq f_j(D)$. Indeed, if
$f_j(D)=1$
then $Y'_j\leq 1$, since there are only $H=\delta^{-1}$
indicators, and if $f_j(D) <1$, then no more than $f_j(D)\delta^{-1}$
indicators are selected; therefore,
$E[Y'_j] \leq
E[f_j(D)]$.
From (\ref{eq:mean_Zh}), we have that
\[
E[Y'_j]=
\delta \sum_{h=1}^{H}
 E[Z_h] \geq \delta \cdot H \cdot   Y_j \alphaf = Y_j\cdot
 \alphaf.
 \]
Hence, $E[f_j(D)]\geq  \alphaf \cdot
Y_j$
as desired.
 \end{dl_proof}

The next lemma follows immediately from Lemma \ref{lp_round1}
and (\ref{eq:lb_fV}).
Recall that $F$ is an extension by expectation of $f$, then
\begin{lemma}
$F(\bar{X}) \geq \alphaf \cdot \mOf \geq (1-\eps) \cdot\alphaf \cdot \tmO$.
\end{lemma}

It is easy to verify that $\bar{X}$ is a feasible fractional solution for the $\SUB$ instance. Recall that the element $(S_i,\bz)$ is big if the
cost $c_r$ of this element in some dimension $1 \leq r \leq d$ is larger than $\eps^3$ times the budget in this dimension. We note that if
$(S_i,\bz)$ is big then it holds that $S_i \in T$, and $X_{(S_i,\bz)} =1$.
Let $D \subseteq U$ be a random set such that $D \sim \bar{X}$. Also,
let $D'=D$ if $D$ is $ \eps$-nearly feasible, and $D'=\emptyset$ otherwise.
 By Theorem \ref{thm:prb_claim}, we get that  $D'$ is always
$\eps$-nearly feasible, and
$$ E[f(D')] \geq (1-O(\eps) ) F(\bx) \geq (1-O(\eps)) \alphaf \tmO .$$

Given a nearly feasible fractional solution for the $\SUB$ instance, we
now show that it can be converted to a feasible one.
\begin{lemma}
\label{lemma:nearly_fix_mcmp}
Let $D$ be an $\eps$-nearly feasible solution for the $\SUB$ instance,
then $D$ can be converted to a feasible solution $D'$ such that
$f(D') \geq (1-O(\eps)) f(D)$.
\end{lemma}
\begin{proof}
Our conversion will be done in two steps.
First, let $D_1 = \{ (S_i,\frac{\bz}{(1+\eps)}) | (S_i,\bz) \in D\}$.
Clearly, $D_1$ is feasible in the first $d_1$ dimensions
(for any $1\leq r \leq d_1$, $c_r (D_1) \leq L_r$);
also, $f(D_1) \geq (1-\eps) f(D)$.
Next, we note that for any $d_1 +1 \leq r \leq d$, for any element
$u\in U$ it holds that $c_{u,r}\leq \eps^3 L_r$. Thus, we can apply
in these dimensions the fixing procedure of Lemma \ref{lemma:nearly_fix}.
Hence, we can convert $D$ to $D'$ as desired.
\end{proof}

We now summarize the steps of the algorithm.
\\
\noindent
{\bf Randomized Approximation algorithm for the Semi-fractional Problem}
\begin{enumerate}
\item For any subset $T\subseteq \bS$ of size at most $h= d\cdot \eps^{-4}$:
\begin{enumerate}
\item Solve $LP(T)$, let $\bxf,\byf$ be the solution found.
\item Define $\bar{X}$, and let $D$ be a random set $D \sim \bar{X}$,
then  $D'=D$ if
$D$ is $\eps$-nearly feasible, and $D'=\emptyset$ otherwise.
\item Convert $D'$ to a feasible
set $D''$ as in the proof of Lemma \ref{lemma:nearly_fix_mcmp}.
\end{enumerate}
\item
Let $D''$ be the solution of maximal profit found for the
$\SUB$ instance. Covert it to a solution for the
semi-fractional problem and return this solution.
\end{enumerate}

Clearly, the algorithm returns a feasible solution for the problem. Consider
the iteration in which $T$ is the set of $h$ most profitable elements in $\tmO$.
In this iteration,  $E[f(D')] \geq (1- \eps) \alphaf \tmO$.
Hence, by Lemma~\ref{lemma:nearly_fix_mcmp},
$E[f(D'')] \geq (1 - O(\eps))\alphaf \tmO$.
By a properly selecting the value of $\eps$,
we get the following.

\begin{theorem}
\label{thm:semi_frac}
There is a polynomial time randomized $(1-\eps)\alphaf$-approximation algorithm for
the semi-fractional problem, for any fixed $\eps >0$.
\end{theorem}

The algorithm can be derandomized, using
the technique in Section \ref{mcmb:deter}.
Note that here, the extension by expectation $F(\bar{X})$
can be deterministically evaluated in polynomial time,
since the number of non-zero entries in $\bar{X}$ is polynomial.
\section{The Budgeted Generalized Assignment Problem}
\label{sec:bgap}

In this section we develop an approximation algorithm for BGAP, and for its generalization, BSAP.
Recall that a BSAP instance consists of $n$ items $A=\{a_1,\ldots,a_n\}$
and $m$ bins, such that bin $i$ has a $d_2$-dimensional capacity $\bb_i$. Each item $a_j$ has a
$d_2$-dimensional size $\bs_{i,j}\geq 0$, a $d_1$-dimensional cost vector $\bc_{i,j}\geq 0$, and
a profit $p_{i,j}\geq 0$ that is gained when $a_j$ is assigned to bin $i$.
Also, given is a $d_1$-dimensional budget vector $\bL$.

We say that a subset of items $S_i \subseteq A$ is a \emph{feasible assignment} for bin $i$ if
$\sum_{a_j \in S_i} \bs_{i,j} \leq \bb_i$. We define the
 cost and profit of assigning $S_i$ to bin $i$ by
$\bc(i,S_i) = \sum_{a_j \in S_i} \bc_{i,j}$, and
$p(i,S_i) = \sum_{a_j\in S_i} p_{i,j}$, respectively.
A \emph{solution} for the problem is a tuple of $m$ {\em disjoint} subsets of items $\bS=(S_1,\ldots,S_m)$,
such that each set $S_i$ is a feasible assignment for bin $i$.
We define the cost of $\bS$ by
$\bc(\bS) = \sum_{i=1}^{m} \bc(i,S_i) = \sum_{i=1}^{m} \sum_{a_j\in S_i} \bc_{i,j}$,
and its profit by
$p(\bS) = \sum_{i=1}^{m} p(i,S_i) = \sum_{i=1}^{m} \sum_{a_j\in S_i} p_{i,j}$.
We say that a solution $\bS$ is \emph{feasible} if $\bc(\bS)  \leq \bL$.
The goal is to find a feasible solution of maximal profit.

As before, we say that a solution $S$ is $\eps$-nearly feasible
if $\bc(\bS) \leq (1+\eps) \bL$. An item $a_j$ is \emph{small} if, for
any bin $1 \leq i \leq m$, it holds that $\bc_{i,j} \leq \eps^3 \bL$;
otherwise, $a_j$ is \emph{big}.
Also, an assignment $S_i \subseteq A$ of items to bin $i$
is \emph{small} if
$\bc(i,S_i) \leq \eps^3 \bL$. Our algorithm uses two special cases of
BSAP. The first is
\emph{small items BSAP}, in which all items are small; the second is
 \emph{small assignments BSAP}
in which, for any bin $i$ and a feasible assignment $S_i$, it holds that $S_i$ is a small assignment
for bin $i$.

\paragraph{Overview:}

Our algorithm proceeds in four  stages.
The first stage obtains an $\eps$-nearly feasible solution with high profit
for small assignments instances of BSAP,
 by using Theorem \ref{thm:prb_claim}.
To do so, we use an interpretation of the classic SAP problem as a submodular optimization
problem and the technique of ~\cite{fgms06} to obtain
a distribution over its solution space. The small assignments property is used
 to show that the conditions of Theorem~\ref{thm:prb_claim} are
satisfied.

The second stage shows how enumeration can be used to reduce a small items instance of BSAP into a small assignments instance, so that the
algorithm of the first stage can be applied. The main idea is to guess the most profitable bins in some optimal solution and the approximate
cost of these bins in this solution. The algorithm uses this guess
 to eliminate all big assignments, by adding $d_1$ linear
constraints for each bin. The result of this stage is a $2\eps$-nearly feasible solution of high profit.

The third stage handles small items instances. The fact that all items are small is essential for converting the nearly feasible solution of the
previous stage to a feasible one. The fourth and last stage gives a reduction from general instances to small items instances. This is done by
simple enumeration, as in Section \ref{sec:random_mcmb}.

\subsection{Small Assignments Instances of BSAP}
\label{sec:SABSAP}

We solve BSAP instances with small assignments by casting BSAP as an instance
 of generalized $\SUB$ (see Section \ref{sec:prb_claim}).\footnote{The submodular interpretations
of GAP and SAP are well-known (see, e.g., \cite{FV06} and \cite{fgms06}).}
 Let $\cI_i$ be the set of feasible assignments for bin $i$, and $U =
\{ (i,S_i) |~ S_i \in \cI_i\}$. Define $f_j:2^U \rightarrow \mathbb{R}_+$ for any $a_j \in A$ by
$$f_j (V) = \max \{ p_{i,j} | ~(i,S_i)\in V, a_j \in S_i \},$$
where the maximum over an empty set is equal
to zero. Also, define $f:2^U \rightarrow \mathbb{R}_+$ by $$f(V) = \sum_{a_j \in A} f_j(V).$$

It is easy to verify that $f$ is a non-decreasing submodular function. We define the ($d_1$-dimensional) cost of an element  $(i,S_i) \in U$ by
$\bc((i,S_i))=\bc(i,S_i)$, and as in Section~\ref{sec:randomized_mcmb}, the cost of a subset $V\in 2^U$ is $\bc(V)= \sum_{(i,S_i) \in V }
\bc(i,S_i)$. Let $\mM$ denote the collection of all subsets in which there is a single assignment to each bin. Formally,
$$\mM = \{ V \in 2^U | \mbox{ there is no $i$ such that $(i,S_{i_1}),(i,S_{i_2}) \in V$, where $S_{i_1} \neq S_{i_2}$} \}. $$
We consider the following instance of generalized $\SUB$: Maximize $f(V)$ subject to the constraints $V\in \mM$ and $\bc(V) \leq \bL$.

We note that any set $V \in \mM$ can be converted to a solution $\bS$ for $BSAP$ with $\bc(\bS) \leq \bc(V)$ and $p(\bS) = f(V)$. To do so, we
assign each item $a_j$ to a bin $i$ if there exists $(i,S_i) \in V$
 such that $a_j\in S_i$ and $f_j(V) = p_{i,j}$. If no such bin $i$ exists, we
do not assign $a_j$ to any bin. Similarly, any solution $\bS$ for
 $BSAP$ can be converted to a set $V\in \mM$ such that  $\bc(\bS) = \bc(V)$
and $p(\bS) = f(V)$.

Now, we use a technique of~\cite{fgms06} to obtain a fractional solution
 for the generalized ${\SUB}$ instance. Consider the linear program
LP-BSAP over the variables $X_i^{S}$, for all $1\leq i\leq m$ and $S\in \cI_i$.
\begin{figure}
\begin{center}
\[
\begin{array}{llll}
\mbox{(LP-BSAP)}& \mbox{maximize} &
 \displaystyle{\sum_{i=1}^{m}  \sum_{S \in  \cI_i}  X_i^{S} \cdot p(i,S) } \\
 \mbox{subject to:}
  & \forall a_j\in A  & \displaystyle{\sum_{i=1}^{m} \sum_{S\in \cI_i| a_j \in S  } X_i^{S} \leq 1}\\
  &\forall 1\leq i \leq m &\displaystyle{ \sum_{S \in \cI_i} X_i^{S} = 1 } \\
  &\forall 1\leq r \leq d &\displaystyle{ \sum_{i=1}^{m} \sum_{S \in \cI_i} X_i^{S} c_r(i,S) \leq L_r }
\end{array}
\]
\end{center}
\end{figure}
We note that the optimal solution for LP-BSAP is at
least $p(\mO)$, where $\mO$ is an optimal solution for the BSAP
instance.
Since we have $d_2$ linear constraints over the bins, where $d_2 \geq 1$ is some constant,
a feasible solution of value $(1-\eps)$ times the optimal can be found in polynomial time (see \cite{fgms06} for more details). Let $\bX$ be
such a solution, then the value of the solution $\bX$ is at least $(1-\eps) \cdot p(\mO)$.

Let $\cD_i$ be a random variable over $\cI_i$ with $Pr[\cD_i = S] = X_i^S$. Define a random set $\cD =\{(1,\cD_1),(2,\cD_2),\ldots,(m,\cD_m)\}$,
and let $\chi$ be the distribution of $\cD$. In~\cite{fgms06} it is shown that $E[f(\cD)]$ is at least $(1-e^{-1})$ times the value of the
solution $\bX$; thus, $E[f(\cD)] \geq (1-e^{-1})\cdot (1-\eps) \cdot p(\mO)$.
It is easy to verify that the conditions of  Theorem~\ref{thm:prb_claim} are satisfied for the distribution $\chi$.
We define $\cD'=\cD$ if $\cD$ is $\eps$-nearly feasible, and $\cD'=\emptyset$ otherwise. Then, by Theorem~\ref{thm:prb_claim}, $\cD'$ is always
$\eps$-nearly feasible and $E[f(\cD')] \geq (1-e^{-1})(1-O(\eps))p(\mO)$. As before, $\cD'$ can be converted to a solution $\bS$ for the BSAP
instance, such that $p(\bS)= f(\cD')$, and $\bc(\bS) \leq \bc(\cD')$. We summarize the above steps in the following.
\\
\noindent
{\bf Approximation Algorithm for Small Assignments BSAP Instances}
\begin{enumerate}
\item Find a $(1-\eps)$-approximate solution $\bX$ for LP-BSAP.
\item For any bin $1 \leq i \leq m$,  select an assignment $\cD_i =S_i$ with probability $X_i^{S_i}$
and define $\cD = \{(1,\cD_1),\ldots,(m,\cD_m)\}$
\item If $\cD$ is $\eps$-nearly feasible return $\cD$ as the solution,
else return an empty assignment.
\end{enumerate}
%
%
\begin{lemma}
\label{thm:SABSAP} The above Approximation Algorithm for Small Assignments
 BSAP Instances outputs in polynomial time an $\eps$-nearly feasible
solution with  expected profit at least $(1-e^{-1})(1-O(\eps))p(\mO)$, where $\mO$ is an optimal solution.
\end{lemma}

\subsection{Reduction to Small Assignments BSAP Instances}

We now describe an algorithm which reduces a general BSAP instance
to a small assignments instance. We use this reduction to obtain an
augmentation algorithm for general instances, by applying Algorithm for Small Assignment Instances of
Section \ref{sec:SABSAP}.

Let $\mO= (S_1,\ldots,S_m)$ be an optimal solution for an instance of BSAP. We say that the profit of bin $i$ (with respect to $\mO$) is
$p^{\mO} (i)= p(i,S_i)$, and the cost of bin $i$ is $\bc^{\mO}(i) = \bc(i,S_i)$. The first step in our algorithm is to guess the set $T$ of
$h=d_1\cdot \eps^{-4} $ most profitable bins in the solution $\mO$.
 We then guess the cost of any bin $i \in T$ in each dimension $1 \leq r \leq d_1$.
 with
accuracy $\delta= \eps \cdot h^{-1}$.
That is,
for each bin $i\in T$ we guess a $d_1$-dimensional vector of integers
$\bk_i=(k_{i,1},\ldots,k_{i,d_1})$ such that, for any $1\leq r \leq d_1$,
\begin{equation}
\label{eqn:reduction_guess}
k_{i,r} \cdot \delta \cdot L_r \leq c^{\mO}(i) \leq (k_{i,r}+1) \cdot  \delta \cdot L_r
\end{equation}
As the number of values $k_{i,r}$ can get is at most  $\lceil \delta^{-1} \rceil$, we can
go over all possible cost vectors in polynomial time, for some constant $\eps > 0$.

We use our guess of the set $T$ and the vectors $\bk_i$ to define a residual instance of BSAP. The ground set of elements is $A$, and there are
$m$ bins. We define the  budget of the residual instance, $\bL'$, to be $L'_r = L_r (1-\sum_{i\in T} k_{i,r} \delta )$, for $1 \leq r \leq d_1$.
For any $i\notin T$, the feasible set of assignments for bin $i$ is $\cI'_i= \{ S~|~ S\in \cI_i,~ \bc(i,S_i)\leq \eps^3 \bL' \}$, and for any
$i\in T$ the feasible set is $\cI'_i = \{ S |~ S \in \cI_i, \bc(i,S_i)  \leq (k_{i,r}+1)\cdot \delta \cdot  L_r \}$. In both cases, the set of
feasible assignments for bin $i$ is defined by $d_1+d_2$ linear constraints. The new cost of $a_j$ when assigned to bin $i$ is $\bc'_{i,j} =
\bc_{i,j}$ if $i \notin T$ and $\bc'_{i,j} = 0$ if $i\in T$. The new profit of $a_j$ when assigned to bin $i$ is $p'_{i,j}= p_{i,j}$.

A crucial property of the residual instance is that it is a small assignments instance.
 For any bin
$i\in T$ and $S_i \in \cI'_i$ it holds that $\bc'(i,S_i) = 0$. For any bin $i\notin T$ and $S_i \in \cI'_i$,
by the definition of $\cI'_i$ it holds that $\bc'(i,S_i)= \bc(i,S_i)\leq \eps^3 \bL'$.
The relation between the solutions of the residual and original
problems is stated in the next two lemmas.

\begin{lemma}
\label{thm:reduction_lemma1}
Let $\bS=(S_1,\ldots,S_m)$ be an $\eps$-nearly feasible  solution for the residual problem with respect
to a set $T$ (of size at most $h$) and a collection of vectors $\bk_i$. Then $\bS$ is a
$(2\eps)$-nearly feasible  solution for the original problem with $p(\bS)=p'(\bS)$.
\end{lemma}

\begin{proof}
Clearly, $\bS$ is a solution for the original problem since, for every bin $i$, it holds that $\cI'_i \subseteq \cI_i$. Also,
for any bin $i$ and $a_j\in A$ it holds that $p_{i,j}=p'_{i,j}$. Hence, $p(\bS)=p'(\bS)$.
As for the cost, for any $1\leq r \leq d_1$, we have
\begin{eqnarray*}
c_r(\bS) &=& \sum_{i\in T} c_r(i,S_i) + \sum_{i\notin T} c_r'(i,S_i) \\
&\leq& \sum_{i \in T} (k_{i,r}+1)\cdot \delta L_r +\sum_{i\notin T} c_r'(i,S_i) \\
&\leq& \abs{T} \cdot \delta  L_r + \sum_{i \in T} k_{i,r}\cdot \delta  L_r
               +\sum_{1\leq i\leq m} c_r'(i,S_i) \\
&\leq& h \cdot \delta   L_r + \sum_{i \in T} k_{i,r}\cdot \delta  L_r
               +(1+\eps) L'_r \\
& \leq &
 \eps \cdot L_r + L_r + \eps L'_r  \leq (1+2\eps) L_r
\end{eqnarray*}
\end{proof}
\begin{lemma}
\label{thm:reduction_lemma2} Let $T$ be the set of $h$ most profitable bins in an optimal
 solution $\mO$, and let $\bk_i$ be the vector of cost
guesses for which \eqref{eqn:reduction_guess} holds. Then there is a feasible
 solution $\bS'$ for the residual  problem with respect to $T$ and
$\bk_i$ whose profit is $p'(\bS')\geq (1-\eps)p(\mO)$.
\end{lemma}

\begin{proof}
Let $\mO=(S_1,\ldots,S_m)$ be an optimal solution for the original problem. We define a
solution $\bS'=(S'_1,\ldots,S'_m)$ for the residual problem by $S'_i=S_i$ if  $S_i \in \cI'_i$ and
$S'_i=\emptyset$ otherwise. Clearly, $\bS'$ is a solution for the residual problem.
For any dimension $1\leq r \leq d_1$,
$$ L_r \geq c_r^{\mO} = \sum_{i \notin T} c_r(i,S_i)  + \sum_{i \in T} c_r(i,S_i) \geq
\sum_{i \notin T} c_r(i,S_i)  + \sum_{i \in T} k_{i,r} \cdot \delta \cdot L_r = \sum_{i \notin T} c_r(i,S_i) + (L_r-L'_r)$$ This implies that
$\sum_{i \notin T} c_r(i,S_i) \leq L'_r$.  Now, for any $i\in T$ it holds that $c'_r(i,S'_i) = 0$, and for any $i$ it holds that $c'_r(i,S'_i)
\leq c_r(i,S_i)$. Thus, we have that
\begin{eqnarray}
c_r'(\bS') & =& \sum_{i\in T} c'_r(i,S_i) + \sum_{i \notin T} c'_r(i,S_i) \nonumber \\
 & \leq &  \sum_{i \notin T} c_r(i,S_i) \leq L'_r,
 \label{eq:cost_not_T}
\end{eqnarray}
i.e., $\bS'$ is a feasible solution for the residual problem.

Let $H$ be the set of bins for which $S_i \neq S'_i$. Since
 \eqref{eqn:reduction_guess} holds for the vectors $\bk_i$, clearly, for any $i\in
T$, we have that $S'_i = S_i$, and thus $H \cap T =\emptyset$.
 By (\ref{eq:cost_not_T}), the total cost of bins not in $T$ is bounded by $\bL'$.
Hence, for all of them except for at most $d_1 \cdot \eps^{-3} $, it holds that $c_r(i) \leq \eps^3 L'_r$, for
 $1 \leq r \leq d_1$. In other words, for
every $i\notin T$, except at most $d_1 \cdot \eps^{-3} $,
 it holds that $S_i \in \cI'_i$ and$S_i = S'_i$.
It follows that $\abs{H} \leq d_1 \cdot \eps^{-3}$, and that none of the bins in $T$ is in $H$.
 The profit of any bin that is not in $T$ is
smaller than $p(\mO)/ \abs{T}$ (since $T$ is the set of most profitable bins). Therefore,
$$\sum_{i\in H} p^\mO(i) \leq \frac{d_1 \cdot \eps^{-3}\cdot  p(\mO)}{\abs{T}} =
  p(\mO) \cdot \frac{d_1 \cdot \eps^{-3}} {h} =\eps \cdot p(\mO).$$
It follows that
$$p'(\bS')=p(\bS')= \sum_{i=1}^{m} p(i,S'_i) = \sum_{i=1}^{m} p(i,S_i) - \sum_{i\in H} p(i,S_i)
= p(\mO) - p(H) \geq (1-\eps)\cdot p(\mO).$$
\end{proof}

We summarize with the following algorithm.\\

\noindent
{\bf Nearly Feasible Algorithm for BSAP}
\begin{enumerate}
\item Enumerate over all subsets $T$,  $\abs{T}\leq h = d_1 \cdot \eps^{-4}$, and
cost vectors $\bk_i$ for any $i\in T$.
\begin{enumerate}
\item Define a residual instance with respect to $T$ and $\bk_i$,
 and run Algorithm for Small Assignments BSAP on the residual instance.
Consider the resulting solution as a solution for the original problem.
\end{enumerate}
\item Return the best solution found.
\end{enumerate}

\begin{theorem}
\label{thm:reduction_theorem}
Let $\mO$ be an optimal solution for a BSAP instance. Then
the Nearly Feasible Algorithm for BSAP returns in polynomial time a $(2\eps)$-nearly
feasible solution $\bS$,
with expected profit at least  $E[p(\bS)] \geq (1-e^{-1})(1-O(\eps))p(\mO)$.
\end{theorem}
\begin{proof}
As the number of possibilities for $T$ and the vectors $\bk_i$
is polynomial for fixed values of $d_1$,$d_2$ and $\eps$, and Algorithm for Small
Assignments BSAP is polynomial as well, we get that the Nearly Feasible Algorithm  for BSAP runs in polynomial time.

Since the algorithm for Small Assignments BSAP always returns an
 $\eps$-nearly feasible solution
(for the residual problem),
by Lemma \ref{thm:reduction_lemma1}, we get that the solutions output
 by the algorithm
are always $(2\eps)$-nearly feasible for the original problem. Finally,
 in the iteration where the set $T$ and the vectors $\bk_i$ satisfy the
conditions of Lemma \ref{thm:reduction_lemma2}, the optimal solution for
 the residual problem has profit at least $(1-\eps)p(\mO)$. Thus, by Lemma \ref{thm:SABSAP}, the
expected profit of the solution returned by Algorithm for Small Assignments BSAP is at least
$(1-e^{-1})(1-O(\eps))(1-\eps)p(\mO)=(1-e^{-1})(1-O(\eps))p(\mO)$.

As the expected profit of the solution returned  by the algorithm is at least the
expected profit in any iteration, this yields the statement of the theorem.
\end{proof}

\subsection{Small Items BSAP Instances}

We now consider inputs in which all items are small. For such inputs we can
 fix a nearly feasible solution. The proof of the next result is similar to the proof of
Lemma \ref{lemma:nearly_fix} (details omitted).

\begin{lemma}
\label{lemma:nearly_fix_bsap}
Let $\bS=(S_1,\ldots,S_m)$ be an $\eps$-nearly feasible solution
for a small items instance of BSAP. Then $\bS$ can be converted to a feasible
solution $S'$ such that $p(S') \geq (1- O(\eps) ) p(S)$.
\end{lemma}

Lemma \ref{lemma:nearly_fix_bsap} can be easily coupled with the Nearly Feasible Algorithm
to obtain an approximation algorithm for small items instances of BSAP, which always returns a feasible
solution.
\\
\\
\noindent
{\bf Algorithm for Small Items BSAP}
\begin{enumerate}
\item Run the Nearly Feasible Algorithm for BSAP. Let $\bS$ be the resulting solution.
\item Convert $\bS$ to a feasible solution $\bS'$ (as in
Lemma \ref{lemma:nearly_fix_bsap}) and return $\bS'$.
\end{enumerate}
The properties of the algorithm follow immediately from Lemmas \ref{thm:reduction_theorem}
 and \ref{lemma:nearly_fix_bsap}.

\begin{lemma}
Algorithm for Small Items BSAP returns a feasible solution for the
 problem with expected profit of at least $(1-e^{-1})(1-O(\eps))p(\mO)$, where
$\mO$ is an optimal solution.
\end{lemma}

\subsection{General Inputs}
We now use Algorithm for Small Items BSAP to obtain $(1-e^{-1} - \eps)$-approximation for general inputs. Given an input for BSAP, let $\mO=
(S_1,\ldots,S_m)$ be an optimal solution. We say that the profit of an element $a_j$ in $\mO$ is zero if it is not assigned to any bin, and
$p_{i,j}$ if it is assigned to bin $i$.

Let $T^*=(T^*_1,\ldots,T^*_m)$ be a feasible assignment of elements to bins. Given $T^*$, we define a residual instance of the problem with
respect to $T^*$. The new budget is $\bL^*=\bL - \bc(T^*)$, the new set of elements $A^*$ is the collection of all elements in $A$ which are not
assigned in $T^*$ to any bin. Consider the assignment of  $a_j$ to bin $i$.
 If  $\bc_{i,j} \leq \eps^3 \bL^*$, then $p^*_{i,j}=p_{i,j}$ and
$\bc^*_{i,j}=\bc_{i,j}$; otherwise, set $p^*_{i,j}=0$ and $\bc^*_{i,j}=0$.
 The size remains $\bs^*_{i,j}=\bs_{i,j}$. The new capacity of bin $i$
is $\bb^*_i= \bb_i -\sum_{a_j \in T^*_i} \bs_{i,j}$.

Clearly, the residual instance with respect to $T^*$ is
small items instance. It is easy to verify that,
 given a feasible assignment for the residual
problem with respect to $T^*$ of profit $v$, we can derive a feasible assignment for the original problem of profit $p(T^*) + v$. Now, consider
$T^*$ to be the assignment of the $h=d_1\cdot \eps^{-4}$ most profitable elements in $\mO$. Then there is a solution for the residual problem of
value $p(\mO)- (1+\eps) p(T^*)$ (the proof is similar to the proof of Lemma
 \ref{thm:reduction_lemma2}). Thus, if we guessed $T^*$ correctly,
Algorithm for Small Items BSAP can be used to obtain a solution with expected  profit at least $(1-e^{-1}-O(\eps))
 (p(\mO)- (1+\eps) f(T^*))$ for the residual
problem, from which we can derive a solution for the original instance
of profit $(1-e^{-1}-O(\eps)) (p(\mO)- (1+\eps) p(T)) +p(T) \geq (1-e^{-1} - O(\eps) ) p(\mO)$.

WE summarize with an algorithm for general inputs.
\\
\\
\noindent {\bf Approximation Algorithm for BSAP}
\begin{enumerate}
\item For any feasible assignment $T^*$ of at most $h=d \cdot \eps^{-4}$ elements:
\begin{enumerate}
\item Find a $(1-e^{-1} -O(\eps))$-approximate solution $\bS^*$ for
the residual problem with respect to $T^*$.
\item Derive from $\bS^*$ a solution for the original problem
of profit $p(\bS^*)+p(T^*)$.
\end{enumerate}
\item
Return  the solution of maximal profit found.
\end{enumerate}

\begin{theorem}
\label{thm:alg_bgap}
There is a polynomial time randomized $(1-e^{-1} - \eps)$-approximation algorithm for BSAP,
for any fixed $\eps > 0$.
\end{theorem}
}
\section{Discussion}

In this paper we established a strong relation between the
continuous relaxation of $\SUB$ and the discrete problem. This
relation is nearly optimal and suggests that future research
should focus on deriving better approximation ratios for the
continuous problem.

The question whether better rounding exists remains open; namely, is it
 possible to obtain an $\alpha-$approximation algorithm for $\SUB$, given
an $\alpha<1$ approximation algorithm for the continuous problem?
And more specifically, is there a polynomial time $(1-e^{-1})-$approximation
for $\SUB$ with monotone objective function?

Finally, the running times of our algorithms are exponential in $1/\eps$, thus rendering them impractical.
 Yet, the hardness results for $d$-dimensional Knapsack (see, e.g., \cite{KPP04,MC84,KS10}),
    a special case of $\SUB$,
  hint that significant improvements over these running times may be impossible.


\bibliographystyle{abbrv}
\bibliography{bgap}

\appendix

\section{Basic Properties of Submodular Functions}
\label{app:basic_props}
In this section we give some simple properties of submodular functions.
Recall that $f : 2^U\rightarrow \mathbb{R}$ is a submodular function
if $f(S) + f(T) \geq f(S \cup T) + f(T \cap S)$ for any $S,T \subseteq U$.
We define $f_T(S) = f(S\cup T) - f(T)$.
\begin{lemma}
\label{lemma:submodular_summation}
Let $f: 2^U\rightarrow \mathbb{R}$ be a submodular function with $f(\emptyset)\geq 0$,
and let $S= S_1 \cup S_2 \cup \ldots \cup S_k$, where $S_i$ are disjoint sets.
Then $$f(S) \geq f(S_1) +f(S_2)+\ldots f(S_k).$$
\end{lemma}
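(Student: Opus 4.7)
The plan is to prove this by induction on the number of pieces $k$. The base case $k = 1$ reduces to $f(S_1) \geq f(S_1)$, which is immediate.

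For the inductive step, I would set $S' = S_1 \cup \cdots \cup S_{k-1}$, so that $S = S' \cup S_k$; pairwise disjointness of the $S_i$ forces $S' \cap S_k = \emptyset$. I would then apply the submodularity inequality to the pair $(S', S_k)$:
$$f(S') + f(S_k) \geq f(S' \cup S_k) + f(S' \cap S_k) = f(S) + f(\emptyset).$$
Using $f(\emptyset) \geq 0$, this collapses to a comparison between $f(S') + f(S_k)$ and $f(S)$. Together with the inductive hypothesis relating $f(S')$ to $f(S_1) + \cdots + f(S_{k-1})$, this assembles, via a telescoping chain, the relationship between $f(S)$ and $\sum_{i=1}^{k} f(S_i)$.

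The critical point — and what I expect to be the main obstacle — is that submodularity combined with $f(\emptyset) \geq 0$ naturally produces the inequality $f(S) \leq \sum_{i=1}^{k} f(S_i)$, i.e., subadditivity on disjoint unions, rather than the superadditive direction $f(S) \geq \sum_{i=1}^{k} f(S_i)$ written in the lemma. One can verify this mismatch against the usage in Claim~\ref{lemma:bounding_via_R}, where the lemma is invoked to conclude $f(D) \leq f(D_1) + \cdots + f(D_{2d\ell})$ — i.e., the paper uses the subadditive direction. Accordingly, the proof I would write is the clean induction above, and the real ``obstacle'' is acknowledging that the inequality symbol in the displayed statement is typographically reversed relative to both the identity that submodularity actually delivers and the form that is used downstream. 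A concrete counterexample — for instance $f \equiv 1$ on nonempty sets with $f(\emptyset) = 0$, which is submodular and nonnegative but has $f(S_1 \cup S_2) = 1 < 2 = f(S_1) + f(S_2)$ for disjoint nonempty $S_1, S_2$ — confirms that the displayed $\geq$ cannot be proved as stated.
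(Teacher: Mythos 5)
Your proof is correct and essentially identical to the paper's own: the paper likewise argues by induction, applies the submodularity inequality to a disjoint pair together with $f(\emptyset)\geq 0$, and its proof in fact concludes $f(S) \leq f(S_1)+f(S_2)+\ldots+f(S_k)$ — confirming your diagnosis that the $\geq$ in the displayed statement is a typo for $\leq$. Your counterexample and the downstream invocation in Claim~\ref{lemma:bounding_via_R}, which uses the lemma in the form $f(D)\leq f(D_1)+\ldots+f(D_{2d\ell})$, both corroborate that the subadditive direction is the intended one.
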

\begin{proof}
By induction on $k$. For $k=2$, since $f$ is a submodular function, we have that
$$f(S_1) + f(S_2) \geq f(S_1 \cup S_2) + f(S_1 \cap S_2)= f(S) +f(\emptyset),$$
and since $f(\emptyset) \geq 0$, we get that $f(S) \leq f(S_1) + f(S_2)$.

For $k>2$, using the induction hypothesis twice, we have
$$f(S)\leq f(S_1)+f(S_2)+\ldots f(S_{k-2})  + f(S_{k-1}\cup S_{k}) \leq
f(S_1) +f(S_2)+\ldots f(S_k).$$
\end{proof}
\begin{lemma}
\label{lemma:submodular_decreasing}
Let $f: 2^U\rightarrow \mathbb{R}_+$ be a submodular function, and let $S,T_1,T_2 \subseteq U$
such that $T_1 \subseteq T_2$ and $S \cap T_2 =\emptyset$. Then,
$f_{T_2}(S) \leq f_{T_1}(S)$.
\end{lemma}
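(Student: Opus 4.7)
The plan is to prove this standard fact directly from the definition of submodularity applied to a single, carefully chosen pair of sets. Specifically, I would apply the submodularity inequality
\[
f(A) + f(B) \geq f(A \cup B) + f(A \cap B)
\]
to the choice $A = S \cup T_1$ and $B = T_2$.

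The key observation is that under the hypotheses $T_1 \subseteq T_2$ and $S \cap T_2 = \emptyset$, both the union and the intersection simplify in the desired way. Since $T_1 \subseteq T_2$, the union collapses to $A \cup B = S \cup T_1 \cup T_2 = S \cup T_2$. For the intersection, distributing gives $(S \cup T_1) \cap T_2 = (S \cap T_2) \cup (T_1 \cap T_2)$; the first term is empty by assumption and the second equals $T_1$ since $T_1 \subseteq T_2$. Thus $A \cap B = T_1$.

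Substituting into the submodularity inequality yields
\[
f(S \cup T_1) + f(T_2) \geq f(S \cup T_2) + f(T_1),
\]
which rearranges to $f(S \cup T_2) - f(T_2) \leq f(S \cup T_1) - f(T_1)$, i.e., $f_{T_2}(S) \leq f_{T_1}(S)$ by definition of $f_T$. There is no real obstacle here — the only subtlety is verifying the set-theoretic simplifications of the union and intersection, which use both hypotheses ($T_1 \subseteq T_2$ for the union, and $S \cap T_2 = \emptyset$ together with $T_1 \subseteq T_2$ for the intersection). Non-negativity of $f$ is in fact not needed for this statement; only submodularity is invoked.
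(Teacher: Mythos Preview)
Your proof is correct and matches the paper's argument exactly: both apply the submodularity inequality to the pair $A = S \cup T_1$, $B = T_2$ and simplify the union and intersection using the two hypotheses. Your additional remark that non-negativity of $f$ is not actually used is also accurate.
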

\begin{proof}
Since $f$ is submodular,
$$f(S\cup T_1) +f(T_2) \geq f(S\cup T_1 \cup T_2) + f( (S \cup T_1)\cap T_2) =
f(S \cup T_2) +f(T_1).$$
Hence, $f_{T_2}(S) \leq f_{T_1}(S)$.
\end{proof}
\begin{lemma}
\label{lemma:submodular_summation_inequality}
Let $f: 2^U\rightarrow \mathbb{R}_+$ be a submodular function, and let
$S= S_1 \cup S_2 \cup \ldots \cup S_k$,
where $S_i$ are disjoint sets. Then,
$$f(S) \geq \sum_{i=1}^{k} f_{S\setminus S_i} (S_i).$$
\end{lemma}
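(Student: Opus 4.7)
The plan is to use a telescoping decomposition of $f(S)$ combined with the ``diminishing returns'' property already established in Lemma \ref{lemma:submodular_decreasing}. Set $T_0 = \emptyset$ and $T_i = S_1 \cup S_2 \cup \cdots \cup S_i$ for $1 \leq i \leq k$, so $T_k = S$ and the $S_i$ are disjoint from the corresponding $T_{i-1}$. Then I would write
\[
f(S) \;=\; f(T_k) - f(T_0) + f(\emptyset) \;=\; f(\emptyset) + \sum_{i=1}^{k} \bigl( f(T_i) - f(T_{i-1}) \bigr) \;=\; f(\emptyset) + \sum_{i=1}^{k} f_{T_{i-1}}(S_i).
\]

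Next, I would compare each marginal $f_{T_{i-1}}(S_i)$ with the target marginal $f_{S \setminus S_i}(S_i)$. Since the $S_j$ are pairwise disjoint, we have $T_{i-1} = S_1 \cup \cdots \cup S_{i-1} \subseteq S \setminus S_i$, and $S_i$ is disjoint from $S \setminus S_i$. Lemma \ref{lemma:submodular_decreasing} therefore applies with the roles $T_1 := T_{i-1}$ and $T_2 := S \setminus S_i$, yielding
\[
f_{S \setminus S_i}(S_i) \;\leq\; f_{T_{i-1}}(S_i)
\]
for every $i$. Summing these inequalities and combining with the telescoping identity gives
\[
\sum_{i=1}^{k} f_{S \setminus S_i}(S_i) \;\leq\; \sum_{i=1}^{k} f_{T_{i-1}}(S_i) \;=\; f(S) - f(\emptyset) \;\leq\; f(S),
\]
where the final inequality uses $f(\emptyset) \geq 0$, which holds since $f$ is non-negative.

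There is no real obstacle here: the only subtlety is matching the hypothesis of Lemma \ref{lemma:submodular_decreasing} correctly, which requires the disjointness of the $S_j$'s (so that $T_{i-1} \subseteq S \setminus S_i$ and $S_i \cap (S \setminus S_i) = \emptyset$ both hold simultaneously). Everything else is a one-line telescoping argument, so the proof should be short and self-contained given the earlier lemmas.
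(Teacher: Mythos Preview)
Your proof is correct and follows essentially the same route as the paper's: a telescoping identity $f(S)=f(\emptyset)+\sum_i f_{T_{i-1}}(S_i)$ followed by an application of Lemma~\ref{lemma:submodular_decreasing} to each summand. Your version is in fact slightly more careful than the paper's in explicitly tracking the $f(\emptyset)$ term and invoking non-negativity to discard it.
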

\begin{proof}
We note that
$$f(S) = \sum_{i=1}^{k} f_{S_1 \cup \ldots \cup S_{i-1}} (S_i).$$
By Lemma~\ref{lemma:submodular_decreasing}, for each $i >1$,
$f_{S_1 \cup \ldots \cup S_{i-1}} (S_i) \geq  f_{S\setminus S_i} (S_i)$.
Hence,
$$f(S) \geq \sum_{i=1}^{k} f_{S\setminus S_i} (S_i).$$
\end{proof}

\comment{
\section{General Inputs for BSAP}
\label{sec:bgap_general}
In this section we use the algorithm for small items BSAP to obtain
$(1-e^{-1} - \eps)$-approximation for arbitrary inputs.
Given an input for BSAP, let $\mO= (S_1,\ldots,S_m)$ be an optimal
solution.

Let $T^*=(T^*_1,\ldots,T^*_m)$ be a feasible assignment of elements to bins.
We define a residual instance with respect to $T^*$.
The new budget is $\bL^*=\bL - \bc(T^*)$,
the new set of elements $A^*$ is the collection of all elements in $A$
which were not assigned in $T^*$ to any bin.
Consider the assignment of  $a_j$ to bin $i$.
If  $\bc_{i,j} \leq \eps^3 \bL'$, then $p^*_{i,j}=p_{i,j}$
and $\bc^*_{i,j}=\bc_{i,j}$. Otherwise, set $p^*_{i,j}=0$ and
$\bc^*_{i,j}=0$. The size remains $\bs^*_{i,j}=\bs_{i,j}$. The new capacity of bin
$i$ is $\bb^*_i= \bb_i -\sum_{a_j \in T^*_i} \bs_{i,j}$.

The residual instance with respect to $T^*$ is small items
instance. Any feasible assignment to the residual problem (with
respect to $T^*$) of profit $v$ can be converted to a feasible
assignment to original problem of profit $p(T^*)+v$.
Now, consider  $T^*$ to be the assignment of the $h=d_1\cdot \eps^{-4}$
most profitable elements in an optimal solution $\mO$
(the profit of $a_j$ in $\mO$ is $p_{i,j}$ if it was assigned to bin $i$ and zero otherwise).
 Then there is a solution
to the residual problem of value $p(\mO)- (1+\eps) f(T^*)$
(the proof is similar to the proof of Lemma \ref{thm:reduction_lemma2}).
Thus, if we guessed $T^*$ correctly, the algorithm for small items BSAP can be used to
obtain a solution with expected  profit of at least $(1-e^{-1}-O(\eps))
 (p(\mO)- (1+\eps) f(T^*))$ for the residual
problem, which we can covert to a solution for the original instance
of profit at least $(1-e^{-1} - O(\eps) ) p(\mO)$.
And we get the following algorithm.

\noindent
{\bf Approximation algorithm for BSAP}
\begin{enumerate}
\item For any feasible assignment $T^*$ of at most $h=d \cdot \eps^{-4}$ elements:
\begin{enumerate}
\item Find a $(1-e^{-1} -O(\eps))$-approximate solution $\bS^*$ for
the residual problem with respect to $T^*$.
\item Derive from $\bS^*$ a solution for the original problem
of profit $p(\bS^*)+p(T^*)$.
\end{enumerate}
\item
Return  the solution of maximal profit found.
\end{enumerate}

\begin{theorem}
There is a polynomial time randomized $(1-e^{-1} -
\eps)$-approximation algorithm for BSAP, for any fixed $\eps>0$.
\end{theorem}

\section{Some Proofs}
\label{app:proofs}

\begin{dl_claim_proof}{\ref{lemma:prob_F}}
For any dimension $ 1 \leq r \leq d$,
it holds that $E[ c_r(D) ] = \sum_{k=1}^{m} E[c_r(D_k)] \leq L_r$.
Define $V_r=\{k | c_r(D_k) \mbox{~is not fixed} \}$. Then,
$$Var[c_r(D)] = \sum_{k=1}^{m} Var[c_r(D_k)] \leq \sum_{k\in V_r} E[c_r^2(D_k)] \leq
\sum_{k\in V_r} E [c_r(D_k)] \cdot \eps^3 L_r \leq  \eps^3 L_r \sum_{k=1}^{m} E [c_r(D_k)] \leq
\eps^3 L^2_r.$$
The first inequality holds since $Var[X] \leq E[X^2]$, and the second
inequality follows from the fact that $c_r(D_k) \leq \eps^3 L_r$ for
$k\in V_r$.
Recall that, by the Chebyshev-Cantelli inequality, for any $t > 0$ and a random
variable $Z$,
$$Pr\left[Z- E[Z] \geq t\right] \leq \frac{Var[Z]}{Var[Z]+ t^2}.$$
Thus,
\begin{eqnarray*}
Pr\left[c_r(D) \geq (1+\eps) L_r\right] &=&
Pr\left[ c_r(D) - E[c_r(D)] \geq (1+\eps) L_r -E[c_r(D)] \right]  \\
&\leq& Pr \left[ c_r(D) - E[c_r(D)] \geq \eps \cdot  L_r \right]
\leq \frac{ \eps^3 L^2_r}{\eps^2 L^2_r} = \eps.
\end{eqnarray*}
By the union bound, we have that
$$ Pr[F=0] \leq \sum_{r=1}^{d} Pr[c_r(D) \geq (1+\eps) L_r] \leq
d \eps.$$
\end{dl_claim_proof}


\begin{dl_claim_proof}{\ref{lemma:prob_l}}
By the Chebyshev-Cantelli inequality we have that, for any dimension
$1 \leq r \leq d$,
\begin{eqnarray*}
Pr[R_r > \ell ] &=&
Pr[c_r(D) > \ell \cdot \lr]
 \\
&\leq&
Pr \left[ c_r(D) -E[c_r(D)]  > (\ell-1) \lr \right] \leq \\
&\leq &
\frac{\eps^3 \lr^2}{(\ell-1)^2 \lr^2} \leq
\frac{\eps^3}{(\ell-1)^2},
\end{eqnarray*}
and by the union bound, we get that
$$Pr[R > \ell] \leq \frac{d\eps^3}{(\ell-1)^2}.$$
\end{dl_claim_proof}

\begin{dl_claim_proof}{\ref{lemma:bounding_via_R}}
The set $D$ can be partitioned to $2 d \ell$ sets
$D_1, \ldots D_{2 d \ell}$ such that each of these
sets is a feasible solution. Hence, $f(D_i) \leq \mO$.
Thus by lemma~\ref{lemma:submodular_summation}, $f(D) \leq f(D_1) +\ldots +f(D_{2 d \ell})
\leq 2 d \ell f(\mO)$.
\end{dl_claim_proof}


\begin{dl_claim_proof}{\ref{lemma:mcmb_nearly_feasible}}
By Claims \ref{lemma:prob_F} and \ref{lemma:prob_l}, we have that
\begin{eqnarray*}
E[f(D)]    &=&
E\left[f(D) | F=1\right] \cdot \Prb{F=1} +
 E\left[f(D) | F=0 \land R < 2 \right]\cdot \Prb{F=0 \land (R   < 2)} \\
   &+& \sum_{\ell=1}^{\infty}  E\left[f(D)|  F=0 \land (2 ^\ell \leq R \leq 2^{\ell+1})\right]
        \cdot
         \Prb{F=0 \land (2 ^\ell \leq R \leq 2^{\ell+1})} \\
 &\leq& E[f(D) | F=1] \cdot \Prb{F=1} + 4d^2 \eps \cdot \mO
  + ~d^2 \eps ^3 \cdot \mO \cdot \sum_{\ell=1}^{\infty}  \frac{2^{\ell+2}}
      { (2^{\ell-1})^2 }. \\
 \end{eqnarray*}
Since the last summation is a constant, and $E[f(D)]\geq \mO/2$, we have that
 $$E[F(D)] \leq E[f(D) | F=1 ] \Prb{F=1} + \eps \cdot c \cdot E[F(D)],$$
where $c>0$ is some constant. It follows that
 $$(1-O(\eps)) E[f(D)] \leq E[f(D) | F=1 ] \Prb{F=1}.$$
Finally, since $D' = D$ if $F=1$ and $D' =0$ otherwise, we have that
$$E[f(D')] = E[f(D)|F=1] \cdot \Prb{F=1} \geq (1- O(\eps)) E[f(D)]. $$
 \end{dl_claim_proof}


\begin{dl_proof}{\ref{lemma:derand1}}
Let $U' = \{ i~ |~ 0 < x_i < 1\}$ be the set of all fractional entries.
We define a new cost function $\bc'$ over the elements in $U$.

\[
\begin{array}{lcr}
c'_r(i) &=& \left\{
\begin{array}{lcl}
c_r(i) &~~~~ &i\notin U' \\
0 & &c_r(i) \leq \frac{\eps \cdot L_r}{2k}\\
\frac{\eps \cdot L_r}{2k} (1+\eps/2)^j & &
\frac{\eps \cdot L_r}{2k} (1+\eps/2)^j \leq c_r(i)<\frac{\eps \cdot L_r}{2k} (1+\eps/2)^{j+1}
\end{array}
\right.
\end{array}
\]

Note that for any $i\in U'$, $\bc'(i)\leq \bc(i)$, and
$$c_r(i)\leq (1+\eps /2) c'_r(i) + \frac{\eps \cdot L_r}{2k},$$
for all $1 \leq r \leq d$.
The number of different values $c'_r(i)$ can get for $i\in U'$
is bounded by $\frac{8 \ln (2k) }{\eps}$ (since all elements are
small, and $\ln(1+x) \geq x/2$). Hence the number of different
values $\bc'(i)$ can get for $i\in U'$ is bounded by
$k'= \left( \frac{8 \ln (2k) }{\eps} \right) ^d$.

We start with $\bx'=\bx$, and while there are $i,j \in U'$ such that $x'_i$ and $x'_j$ are both
fractional and $\bc'(i)=\bc'(j)$, define $\delta^{+} = \delta_{\bx',i,j}^{+}$
and $\delta^{-}=\delta_{\bx',i,j}^{-}$.
Since $i$ and $j$ have the same cost (by $\bc'$), it holds that
$\bc' \left( \bx_{i,j}(\delta^{+}) \right) =
\bc' \left( \bx_{i,j}(\delta^{-}) \right) =\bc'(\bx)$.
If $F^{\bx}_{i,j}(\delta^{+})\geq F(\bx)$,
then set $\bx''= \bx_{i,j}(\delta^{+})$, otherwise $\bx''= \bx_{i,j}(\delta^{-})$.
In both cases $F(\bx'')\geq F(\bx')$ and $\bc'(\bx'')= \bc'(\bx')$. Now, repeat
the step with $\bx'=\bx''$.
Since in each iteration the number of fractional entries in $\bx'$ decreases,
 the process will terminate (after at most $k$ iterations)
with a vector $\bx'$ such that $F(\bx')\geq F(\bx)$,
$\bc'(\bx')=\bc'(\bx)\leq \bL$
and there are no two elements $i,j\in U'$ with $\bc'(i)=\bc'(j)$ where
$x'_i$ and $x'_j$ are both fractional. Also, for any $i\notin U'$, the entry
$x'_i$ is integral (since $x_i$ was integral and the entry was not modified by
the process). Thus, the number of fractional entries in $\bx'$ is
at most $k'$.
Now, for any dimension $1 \leq r \leq d$,
\begin{eqnarray*}
c_r(\bx')&=& \sum_{i\notin U'} x'_i c_r(i) + \sum_{i\in U'} x'_i c_r(i) \\
&\leq&
(1+\eps/2) \cdot \sum_{i\notin U'} x'_i  \cdot c'_r(i) +
   \sum_{i\in U'} x'_i \left( (1+\eps/2) c'_r(i) + \frac{\eps \cdot L_r}{2k} \right)\\
&=&
(1+\eps/2) \cdot \sum_{i\in U} x'_i  \cdot c'_r(i) +
   \sum_{i\in U'} x_i \frac{\eps \cdot L_r}{2k} \leq (1+\eps) L_r.
\end{eqnarray*}
This completes the proof.
\end{dl_proof}

\begin{dl_proof}{\ref{thm:reduction_lemma1}}
Clearly, $\bS$ is a solution for the original problem since for every bin $i$ it holds that $\cI'_i \subseteq \cI_i$.
For any bin $i$ and $a_j\in A$ it holds that $p_{i,j}=p'_{i,j}$, then clearly $p(\bS)=p'(\bS)$.
As for the cost, for any $1\leq r \leq d_1$:
\begin{eqnarray*}
c_r(\bS) &=& \sum_{i\in T} c_r(i,S_i) + \sum_{i\notin T} c_r'(i,S_i) \\
&\leq& \sum_{i \in T} (k_{i,r}+1)\cdot \delta L_r +\sum_{i\notin T} c_r'(i,S_i) \\
&\leq& \abs{T} \cdot \delta  L_r + \sum_{i \in T} k_{i,r}\cdot \delta  L_r
               +\sum_{1\leq i\leq m} c_r'(i,S_i) \\
&\leq& h \cdot \delta   L_r + \sum_{i \in T} k_{i,r}\cdot \delta  L_r
               +(1+\eps) L'_r \\
&=& \eps \cdot L_r + L_r + \eps L'_r  \leq (1+2\eps) L_r
\end{eqnarray*}
\end{dl_proof}

\begin{dl_proof}{\ref{thm:reduction_lemma2}}
Let $\mO=(S_1,\ldots,S_m)$ be an optimal solution for the original problem. We define a
solution $\bS'=(S'_1,\ldots,S'_m)$ for the residual problem by $S'_i=S_i$ if  $S_i \in \cI'_i$ and
$S'_i=\emptyset$ otherwise. Clearly, $\bS'$ is a solution for the residual problem.
For any dimension $1\leq r \leq d_1$,
$$ L_r \geq c_r^{\mO} = \sum_{i \notin T} c_r(i,S_i)  + \sum_{i \in T} c_r(i,S_i) \geq
\sum_{i \notin T} c_r(i,S_i)  + \sum_{i \in T} k_{i,r} \cdot \delta \cdot L_r = \sum_{i \notin T} c_r(i,S_i) + (L_r-L'_r)$$
This implies that $\sum_{i \notin T} c_r(i,S_i) \leq L'_r$.  Now, for any $i\in T$ it holds that $c'_r(i,S'_i) = 0$, and
for any $i$ it holds that $c'_r(i,S'_i) \leq c_r(i,S_i)$. Thus, we conclude that
$$ c_r'(\bS') = \sum_{i\in T} c'_r(i,S_i) + \sum_{i \notin T} c'_r(i,S_i) \leq  \sum_{i \notin T} c_r(i,S_i) \leq L'_r,$$
i.e., $\bS'$ is a feasible solution.

Let $H$ be the set of bins for which $S_i \neq S'_i$.
Since  \eqref{eqn:reduction_guess} holds for the vectors $\bk_i$,
it is clear that for any $i\in T$, $S'_i = S_i$, and thus $H \cap T =\emptyset$.
The total cost of bins not in $T$ is bounded by $\bL'$ (since $\sum_{i \notin T} c_r(i,S_i) \leq L'_r$).
Hence, for all of them except at most $d_1 \cdot \eps^{-3} $ it holds that $c_r(i) \leq \eps^3 L'_r$,
for some $1 \leq r \leq d_1$.
This means that for every $i\notin T$ except at most $d_1 \cdot \eps^{-3} $
 it holds that $S_i \in \cI'_i$ and$S_i = S'_i$.
From this, we conclude that $\abs{H} \leq d_1 \cdot \eps^{-3}$, and that none of the bins in $T$ is
in $H$. The profit of any bin which is not in $T$ is smaller
than $p(\mO)/ \abs{T}$ (since $T$ is the set of most profitable bins).
Thus,
$$\sum_{i\in H} p^\mO(i) \leq d_1 \cdot \eps^{-3}\cdot  p(\mO)/ \abs{T} =
  p(\mO) \cdot \frac{d_1 \cdot \eps^{-3}} {h} =\eps \cdot p(\mO)$$.
Thus,
$$p'(\bS')=p(\bS')= \sum_{i=1}^{m} p(i,S'_i) = \sum_{i=1}^{m} p(i,S_i) - \sum_{i\in H} p(i,S_i)
= p(\mO) - p(H) \geq (1-\eps)\cdot p(\mO)$$
\end{dl_proof}

\begin{dl_proof}{\ref{thm:reduction_theorem}}
The properties of the algorithm for small assignments BSAP are
described in Lemma \ref{thm:SABSAP}. As the number of
possibilities for $T$ and the vectors $\bk_i$ is polynomial for
fixed values of $d_1$,$d_2$ and $\eps$, and the algorithm for
Small Assignments BSAP is polynomial as well, we get that the
Nearly Feasible Algorithm  for BSAP runs in polynomial time.

Since the algorithm for Small Assignments BSAP always returns an
 $\eps$-nearly feasible solution
(for the residual problem),
by Lemma \ref{thm:reduction_lemma1}, we get that the solutions output
 by the algorithm
are always $(2\eps)$-nearly feasible for the original problem.
Finally, in the iteration for which the set $T$ and vectors $\bk_i$ satisfy the
requirements of Lemma \ref{thm:reduction_lemma2}, the optimal solution for the
residual problem is of profit at least $(1-\eps)p(\mO)$.
Thus, the expected profit of the solution returned by the small assignments algorithm is
at least $(1-e^{-1})(1-O(\eps))(1-\eps)p(\mO)=(1-e^{-1})(1-O(\eps))p(\mO)$.

As the expected profit of the solution returned  by the algorithm is at least the
expected profit at any iteration, we get that the expected profit
of the solution returned by the algorithm is $(1-e^{-1})(1-O(\eps))p(\mO)$.
\end{dl_proof}
}
\end{document}